\documentclass[twocolumn]{autart}    

\usepackage{cite}
\usepackage{amsmath,amssymb,amsfonts}

\usepackage{graphicx}
  
 \usepackage{array}           \usepackage{changepage}
 \usepackage{enumerate}
 \usepackage{algorithm}
 \usepackage{algpseudocode}
\newcolumntype{F}[1]{>{\raggedright\arraybackslash}p{#1}}
\newcolumntype{T}[1]{>{\centering\arraybackslash}p{#1}}
\newtheorem{theorem}{Theorem}
\newtheorem{lemma}{Lemma}
\newtheorem{remark}{Remark}
\newtheorem{definition}{Definition}
\newtheorem{assumption}{Assumption}
\newtheorem{proposition}{Proposition}

\newenvironment{proof}[1][Proof]{%
  \par\noindent\textbf{#1.}\ }{%
  \hfill$\square$\par
}

\begin{document}

\begin{frontmatter}

\title{A Systematic Approach to Mechanism Design with Stochastic Dynamic Stability}

\author{Shaya Garjani}\ead{shaya.garjani@ut.ac.ir},
\author{Mohammad Shokri}\ead{mo.shokri@ut.ac.ir},
\author{Hamed Kebriaei}\ead{kebriaei@ut.ac.ir}

\address{School of Electrical and Computer Engineering, College of Engineering, University of Tehran, Tehran, Iran}

\begin{keyword}                           
Mechanism design, resource allocation, strategic agent, Nash equilibrium.               
\end{keyword}

\begin{abstract}                          
We consider a resource allocation problem with strategic agents that have private stochastic satisfaction functions and local constraints. To achieve a global optimal solution, we propose an incentive mechanism that induces a game among the agents. For the payment function of the mechanism, we construct a family of quadratic functions using the linear matrix inequality (LMI) approach that implements the social welfare maximizing outcome on the unique Nash equilibrium (NE) of the induced game while ensuring budget balance and individual rationality. Moreover, we propose a decentralized variable sample-size proximal best-response (VS-PBR) algorithm with Krasnoselskij iteration where only aggregate information is available to the agents. The algorithm is dynamically stable, as it is proven to converge in the mean-square sense to the NE of the game. The efficiency of the mechanism is then investigated on the Sioux Falls City transportation network, where electric vehicle (EV) users jointly select their destination and route.
\end{abstract}

\end{frontmatter}

\section{Introduction}\label{sec1}
Resource allocation appears in various applications such as power distribution systems \cite{olivella2020centralised}, transportation networks \cite{bakhshayesh2021decentralized}, social networks \cite{dave2021social}, and communication systems \cite{xu2020client}. The objective in these problems is to efficiently distribute limited resources among the network's agents to maximize overall network utility.
\par
Finding the optimal network solution in resource allocation problems is quite challenging, especially when agents are selfish, prioritize privacy and act strategically. In such cases, agents aim to maximize their own utility, often conflicting with the network's objectives. Incentive mechanism design is extensively used to deal with these complexities in resource allocation \cite{chremos2024mechanism}. An incentive mechanism makes the profit-maximizing behavior of the selfish agents in line with network global optimal solution by determining communication rules for message submission, and outcome functions for allocation and payment rules. The payment function induces a game among the agents. The network manager aims to design the payment function so that agents' strategies at a Nash equilibrium (NE) point of the induced game are equivalent to the optimal solution of the resource allocation problem; this property is known as Nash implementation \cite{williams1986realization}. A more desirable feature would be strong Nash implementation, meaning that the outcome function at every NE of the game results in social welfare maximization. Other desirable properties of the mechanism include budget balance (i.e., sum of all payments at a NE is equal to zero), individual rationality (i.e., voluntary participation in the mechanism), and dynamic stability (i.e., learning algorithm provision that converges to a NE). 

Based on the communication rule, mechanisms can be categorized into direct and indirect mechanisms. One advantage of indirect mechanisms to direct ones is that agents do not have to reveal their private types to the manager \cite{narahari2014game}. Moreover, achieving strong Nash implementation often requires complex indirect mechanisms, and direct mechanisms, even in the case of the popular Vickrey-Clarke-Groves (VCG) mechanism, may result in extra inefficient equilibria \cite{rothkopf2007thirteen,JAIN20101276}.
Another shortcoming of direct mechanisms is that they often lack the budget balance property \cite{ZHONG2026112727}, \cite{satchidanandan2023incentive} or achieve weak budget balance \cite{ANGELI2023110870}, \cite{pu2020online}.

\begin{table*}[ht]

\caption{Comparison with related works.}
\centering
\begin{tabular}
{|F{0.14\textwidth}|T{0.15\textwidth}|T{0.05\textwidth}|T{0.05\textwidth}|T{0.15\textwidth}|T{0.15\textwidth}|T{0.12\textwidth}|}
\hline
\textbf{Study} & 
\textbf{Nash implementation} & 
\textbf{BB} & 
\textbf{IR} & 
\multicolumn{2}{T{0.30\textwidth}|}{\textbf{Dynamic stability}} & 
\textbf{Systematic design} \\ 
\cline{5-6}
&  &  &  & Deterministic env. & Stochastic env. &\\
\hline
\cite{ satchidanandan2023incentive, zhang2021faithful,pu2020online} & \checkmark & $\times$ & \checkmark & $\times$ & $\times$ & $\times$ \\ \hline
\cite{kakhbod2012efficient,sinha2017mechanism,heydaribeni2019distributed,rasouli2019efficient}       & \checkmark & \checkmark & \checkmark & $\times$ & $\times$ & $\times$ \\ \hline
\cite{farhadi2018surrogate, zhang2019efficient, sinha2019distributed, eslami2022incentive, zhong2022nash,yan2025social}                   & \checkmark & \checkmark & \checkmark & \checkmark & $\times$ & $\times$ \\ \hline
This paper                                       & \checkmark & \checkmark & \checkmark & 
\checkmark & \checkmark & \checkmark \\ \hline
\end{tabular}

%\end{adjustwidth}
\end{table*}

\par
There are several indirect mechanisms presented in the literature that achieve strong Nash implementation, individual rationality, and budget balance while also satisfying some additional features, such as those found in \cite{kakhbod2012efficient,sinha2017mechanism,heydaribeni2019distributed,rasouli2019efficient}. The proposed mechanism in \cite{kakhbod2012efficient} maintains a balanced budget at all Nash equilibria and off-equilibrium points. The allocation scheme for the mechanism proposed in \cite{sinha2017mechanism} results in feasible allocations even off-equilibrium. The authors of \cite{heydaribeni2019distributed} consider a message-exchange network where agents can only transmit messages to their local neighborhood, resulting in the decentralized computation of allocation and tax functions. In \cite{rasouli2019efficient}, an electricity market mechanism is proposed which, in addition to achieving the desired properties, is price-efficient and satisfies the local informational constraints of agents.

\par
Dynamic stability along with the other desirable properties of a mechanism have been obtained in a few studies, \cite{farhadi2018surrogate, zhang2019efficient, sinha2019distributed, eslami2022incentive, zhong2022nash}. In \cite{farhadi2018surrogate}, the optimal allocation is obtained via a surrogate optimization approach, and the convergence of the proposed best-estimate learning algorithm is proved based on the contraction property of the mapping. In \cite{zhang2019efficient}, an incremental subgradient-based distributed algorithm is introduced that converges to the generalized Nash equilibrium (GNE) of the game. The authors of \cite{sinha2019distributed} propose a distributed mechanism where agents communicate through a graph and the complexity of the message space grows more than linearly with respect to the number of agents. The convergence of the proposed algorithm in this paper is based on a contraction condition for the best-response mapping. The mechanism proposed in \cite{eslami2022incentive} induces a network aggregative game (NAG) among the agents, and a distributed algorithm is proposed that converges to the solutions of a corresponding variational inequality (VI) problem. The authors of \cite{zhong2022nash} provide a decentralized algorithm based on the alternating direction method of multipliers (ADMM) to reach the GNE of the game. An incentive mechanism for a pseudo-gradient-based dynamical system was proposed in \cite{yan2025social}, where sufficient conditions for its Lyapunov stability were established. The aforementioned papers do not systematically derive the payment function that achieves the desired properties. As a result, the payment functions can be too complex or only suitable for a specific application.

To the best of our knowledge, this paper is the first to systematically design the payment function of a mechanism using linear matrix inequality (LMI) optimization. We consider a resource allocation problem with private stochastic satisfaction functions and local constraints, and propose a systematic mechanism design approach to maximize social welfare under coupling network constraints. The payment function is selected from a class of quadratic functions, and its parameters are determined via an LMI framework to ensure all the desired properties of strong Nash implementation, budget balance, and individual rationality. Additionally, we develop a decentralized variable sample-size proximal best-response (VS-PBR) algorithm with Krasnoselskij iteration, which allows agents to learn their NE strategies using only aggregate information. We prove the convergence of the algorithm in the mean-square sense, ensuring stochastic dynamic stability. The main contributions of the paper are as follows:
\begin{itemize}
    \item 
    \textbf{Systematic payment design via LMI:} By parameterizing the payment function as a family of quadratic functions and imposing LMI-based conditions on its parameters, we provide a general, systematic approach to mechanism design that produces simpler payment functions and can be applied to various applications. To the best of our knowledge, this is the first such approach to mechanism design.
    \item 
    \textbf{Economic properties:} Our proposed mechanism satisfies all the desired properties like strong Nash implementation, budget balance, and individual rationality. 
    \item 
    \textbf{Decentralized learning algorithm with convergence guarantees:} We introduce a decentralized VS-PBR algorithm with Krasnoselskij iteration, enabling agents to learn their optimal NE strategies in a stochastic environment while using only aggregate information. We demonstrate that this algorithm converges to the NE of the induced game in the mean-square sense, ensuring stochastic dynamic stability.
\end{itemize}
\section{Problem Formulation}\label{problem_formulation_dec}
\subsection{Resource allocation network}\label{resource_allocation_subsec}
Consider a resource allocation problem involving $N$ selfish/strategic agents interacting in a network with $K$ resources, coordinated by a network manager. Let $\mathcal{N}=\{1, \ldots, N\}$ and $\mathcal{K}=\{1, \ldots, K\}$ denote the sets of agents and resources, respectively.
\par 
Each agent $n\in\mathcal{N}$ has a strategy vector $x_n\in\mathcal{X}_n\subset\mathbb{R}_{\geq 0}^K$, representing the amount of each resource allocated to them. 
An agent's satisfaction from consuming $x_n$ units of resources is modeled by a stochastic function 
$
\psi_n(x_n;\,\xi_n(\omega)):\ \mathcal{X}_n \times \mathbb{R}^{d_n} \rightarrow \mathbb{R}
$.
The random vector \(\xi_n \in \mathbb{R}^{d_n}\) is defined on the probability space \((\Omega, \mathcal{F}, \mathbb{P})\), where \(\mathcal{F}\) is the \(\sigma\)-algebra of events in $\Omega$ and \(\mathbb{P}\) is the corresponding probability measure. The valuation function $V_n(x_n)$,
interpreted as the average satisfaction at consumption level $x_n$, taken over all possible realizations of $\xi_n$, is defined as $V_n(x_n):= \mathbb{E}[\psi_n(x_n;\xi_n)]$, where the expectation is taken on the probability space $\mathbb{P}$. The valuation function $V_n(\cdot)$ and the strategy set $\mathcal{X}_n$ are private information of agent $n$. 
\par
The network manager aims to maximize total social welfare by solving
\begin{subequations}\label{manager_problem_eq}
	\begin{align}
	  &\max_{x\in \mathcal{X}}{\sum_{n\in\mathcal{N}}{V_n(x_n)}}\\ 
	  \text{s.t.}\quad & \sum_{n\in\mathcal{N}}{x_n}\leq c,
	\end{align}
\end{subequations}
where $x:=\textbf{col}(x_1,\dots,x_N)$, $\mathcal{X}:=\Pi_{n\in \mathcal{N}}\mathcal{X}_n$, and $c\in\mathbb{R}_{\geq 0}^K$ denotes the amount of available resources.
\begin{assumption}\label{agent_set_assumption}
For each agent $n\in\mathcal{N}$, the strategy set $\mathcal{X}_n$ is convex and compact. 
\end{assumption}
\begin{assumption}\label{valuation_function_assumption}
For each agent $n\in\mathcal{N}$, the valuation function $V_n(x_n):\mathcal{X}_n\rightarrow \mathbb{R}$ is continuous, twice differentiable, $\alpha$-strongly concave, i.e., $\forall x_n\in \mathcal{X}_n:$
\begin{equation*}\label{alpha_concave_eq}
\begin{split}
    &(\nabla V_n(x_n)-\nabla V_n(x_n'))^\top(x_n-x_n')\leq-\alpha\|x_n-x_n'\|^2,
\end{split}
\end{equation*}
and $V_n(0)=0$.
\end{assumption}
\begin{remark}
The imposition of conditions in Assumption \ref{valuation_function_assumption} on the valuation functions $V_n(x_n)$ is less strict compared to assuming the conditions for all samples of the stochastic function $\psi_n(x_n;\xi_n)$.
\end{remark}
The resource allocation problem \eqref{manager_problem_eq} with network coupling constraints cannot be solved in a centralized fashion due to the network manager's lack of knowledge about the agents' valuation functions and local constraints. In addition, agents might opt for a different strategy from the optimal solution of optimization problem \eqref{manager_problem_eq} due to their selfishness and their desire to maximize their own utility. To address these challenges, we propose an incentive mechanism.
%%%%%%%%%%%%%%%%%%%%%%%%%%%%%%%%%%%%%%
\subsection{Incentive mechanism design problem}\label{machanism_design_subsec}

\par
The network manager seeks to design an incentive mechanism that imposes a non-cooperative game among the agents, where the equilibrium of the game is the optimal solution to problem \eqref{manager_problem_eq}. To achieve this goal, the manager asks the agents to request resources and propose prices for them. Then, the manager imposes taxes or provides subsidies to the agents through payment functions, which depend on their requested resources and proposed prices, which steers the agents towards choosing the socially optimal strategies.
\par
Let $p_n\in\mathbb{R}_{\geq 0}^{K}$ denote the price per unit of resources proposed by agent $n$. We define $p:=\textbf{col}(p_1,\ldots, p_n)$ as the vector of proposed prices. The message that agent $n$ sends to the manager is defined as $s_n:=\textbf{col}(x_n, p_n)$ and its strategy set is defined by
\[\mathcal{S}_n:=\{s_n=\textbf{col}(x_n, p_n):x_n\in\mathcal{X}_n, p_n\in\mathbb{R}_{\geq 0}^{K}\}.\] 
\par
Based on the message profile $s\in\mathcal{S}$, where $s:=\textbf{col}(s_1,\dots,s_N)$ and $\mathcal{S}:=\Pi_{n\in\mathcal{N}}\mathcal{S}_n$, the manager designs the outcome function $O(s)=\{(y_n(s), t_n(s_n,s_{-n}))\}_{n\in\mathcal{N}}$ where $s_{-n}:=\textbf{col}(s_1, \dots, s_{n-1},s_{n+1}, \ldots, s_N)$ is the strategy of all agents except agent $n$, and $y_n(\cdot)$ and $t_n(\cdot,\cdot)$ denote the allocation and payment functions of the mechanism, respectively. The allocation function in the proposed mechanism is defined as $y_n(s) = x_n$, implying that the manager allocates an identical quantity of resources as requested by each agent. In the rest of the paper, we write $x_n$ instead of $y_n(s)$. The payment of the agents $t_n(s_n, s_{-n})$, introduced at the end of this section, can be zero; positive, indicating that agent $n$ has to pay taxes; or negative, implying that agent $n$ receives a subsidy. The goal of agent $n$, when carrying out strategy $s_n$ in return for payment $t_n(s_n,s_{-n})$, is to maximize its utility,
\begin{equation}\label{utility_maximization_eq}
	\max_{s_n\in\mathcal{S}_n}{[U_n(s_n, s_{-n}):=}V_n(x_n)-t_n(s_n, s_{-n})].
\end{equation}
\par
Since the payment function of each agent is affected by the strategies of other agents, a non-cooperative game is induced among the agents, which is denoted by $\mathcal{G}:=(\mathcal{N}, (\mathcal{S}_n)_{n\in\mathcal{N}}, U_n(s_n, s_{-n}))_{n\in\mathcal{N}}$. 
The Nash equilibrium (NE) of the induced game is defined as follows:

\begin{definition}[Nash Equilibrium]\label{nash_def}
	The strategy vector $s^*=\textbf{col}(s_1^*, \ldots, s_N^*)$ is a Nash equilibrium of game $\mathcal{G}$ if:
	\[
		U_n(s_n^*, s_{-n}^*)\geq U_n(s_n, s_{-n}^*), \quad \forall n\in\mathcal{N}.
	\]
\end{definition}
This implies that, at a Nash equilibrium, no agent can strictly improve its utility by unilaterally deviating from their equilibrium strategy, given that all other agents keep their strategies fixed. 

The manager aims to design the payment function $t_n$ in game $\mathcal{G}$ such that the mechanism satisfies the following desired properties:
\begin{enumerate}[P1)]
	\item \textbf{Existence and Uniqueness of NE:} There exists a unique equilibrium $s^*:=\textbf{col}(s_1^*,\ldots,s_N^*)$ for the induced game.
	\item
	\textbf{Nash implementation:} The  allocated resources at a NE of the induced game $\mathcal{G}$ are equal to the optimal solution $x^o$ of problem \eqref{manager_problem_eq}, i.e., $x_n^*=x_n^o, \, \forall n\in\mathcal{N}$.
	\item
	\textbf{Budget balance:} At the NE of the game, the total payment of agents through the whole network is equal to zero, i.e., $\sum_{n\in\mathcal{N}}{t_n(s_n^*, s_{-n}^*)}=0$.
        \item
	\textbf{Individual rationality:} All users participate in the mechanism voluntarily, in a sense that their utility at the NE of the game is at least equal to their utility when they do not participate in the mechanism, i.e., $U_n(s_n^*,s_{-n}^*)\geq V_n(0)=0, \, \forall n\in\mathcal{N}$.
	\item
	\textbf{Dynamic stability:} The manager provides an iterative algorithm, along which the agents can learn their NE strategy.
\end{enumerate}

\par
The payment function $t_n$ is chosen from a class of parameterized quadratic functions as  follows:
\begin{equation}\label{tax_function_eq}
t_n(s_n, s_{-n}):=\frac{1}{2}p^\top A^n p+p^\top B^n x+p^\top a^n.
\end{equation}
Here, $A^n$ is a symmetric matrix and the coefficient matrices and the vector are defined as:
\begin{equation}\label{tax_matrices_eq}
	A^n= \left(A_{ml}^n\right)_{m,l\in\mathcal{N}}, \, B^n= \left(B_{ml}^n\right)_{m,l\in\mathcal{N}}, \, a^n=(a_m^n)_{m\in\mathcal{N}},
\end{equation}
where $A_{ml}^n,\,B_{ml}^n\in\mathbb{R}^{K\times K}$ and $a_m^n\in\mathbb{R}^K$. The manager's objective is now to systematically determine these parameters so that the desired properties P1-P4 are satisfied.
\begin{remark}
Contrary to existing mechanisms in the literature \cite{kakhbod2012efficient, farhadi2018surrogate, eslami2022incentive}, where the structure and parameters of the payment functions are predetermined, and certain desired properties are investigated given that fully known payment function, this paper systematically constructs the payment function, starting with a parameterized quadratic function \eqref{tax_function_eq} and determining its parameters using LMI optimization method that satisfies the properties P1-P4.
\end{remark}
\begin{remark}
In mechanism design, the valuation functions of the agents are typically assumed to be unknown to the manager. This also holds in the stochastic setting considered here; thus, evaluating the expectation term is unnecessary when analyzing properties P1-P4. However, for dynamic stability (P5), during the learning process, each agent must work with a sample-based approximation of the valuation functions in order to compute its decision at each iteration.
\end{remark}
\section{Specification of the payment function}\label{sec4}
In this section, the parameters of $A^n$, $B^n$, and $a^n$ from \eqref{tax_matrices_eq} are determined through an LMI optimization framework  such that the payment functions \eqref{tax_function_eq} satisfy the properties P1-P4 stated in Section \ref{machanism_design_subsec}.
\subsection{Formulation of LMI conditions}\label{LMI_subsec}
In this section, LMI conditions are derived on the payment function parameters \eqref{tax_matrices_eq} so that the mechanism satisfies properties P1-P4.
\subsubsection{Existence and uniqueness (P1)} The induced game $\mathcal{G}$ under the designed mechanism must admit a unique equilibrium. Proposition \ref{existence_uniqueness_prop} provides LMIs that ensure the existence and uniqueness of NE.

\begin{proposition}\label{existence_uniqueness_prop}
Define the matrix $\Psi=(\Psi_{nm})_{n,m\in\mathcal{N}}$ where 
\[\Psi_{nn}=-\begin{pmatrix}
    \alpha I_K & B_{nn}^n\\
        B_{nn}^{n^\top}& A_{nn}^n
\end{pmatrix}, 
\Psi_{nm}=-\begin{pmatrix}
    0 & B_{mn}^n\\
    B_{nm}^{n^\top} & A_{nm}^n
\end{pmatrix},  m\neq n.\] A Nash equilibrium of $\mathcal{G}$ exists and is unique if
	\begin{equation}\label{existence_uniqueness_cond_eq}
	\textnormal{P1:} \left\{
	(\romannumeral 1) \Psi_{nn}\preceq 0, \,
        (\romannumeral 2) \Psi + \Psi^\top \preceq -\upsilon I_{2KN}
	\right\},
	\end{equation}
    where $\upsilon>0$ is a scalar design parameter. 
\end{proposition}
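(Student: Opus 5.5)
We treat $\mathcal{G}$ as a concave game and use the Rosen / variational-inequality framework. Each agent's strategy is $s_n=\textbf{col}(x_n,p_n)\in\mathcal{S}_n$, and its utility is $U_n=V_n(x_n)-t_n$. The plan is to (a) compute the pseudo-gradient $F(s)=\textbf{col}(-\nabla_{s_n}U_n(s))_{n}$, (b) show that condition (ii) makes $F$ strongly monotone, and (c) use condition (i) to make each agent's problem a concave maximization. An NE is then exactly a solution of $\mathrm{VI}(\mathcal{S},F)$, and a strongly monotone continuous operator has exactly one solution of this VI on a closed convex set, even an unbounded one. This matters because $p_n\in\mathbb{R}^K_{\ge0}$ is unbounded.

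\textbf{Gradient.} From \eqref{tax_function_eq}, using the symmetry of $A^n$, we get
\[
\nabla_{x_n}t_n=\textstyle\sum_m B^{n\top}_{mn}p_m,\qquad
\nabla_{p_n}t_n=\textstyle\sum_m A^n_{nm}p_m+\sum_m B^n_{nm}x_m+a^n_n .
\]
The partial Jacobian of $-\nabla_{s_n}U_n=\textbf{col}(-\nabla V_n(x_n)+\nabla_{x_n}t_n,\ \nabla_{p_n}t_n)$ with respect to $s_m$, for $m\neq n$, has the block form $\begin{pmatrix}0&B^{n\top}_{mn}\\ B^n_{nm}&A^n_{nm}\end{pmatrix}$. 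Up to the block ordering and transposition convention, this is $-\Psi_{nm}$. For $m=n$, the $V_n$ part enters as $-\nabla^2V_n\succeq\alpha I$, which is where the $\alpha I_K$ in $\Psi_{nn}$ comes from. I would first check the transposes carefully. A mismatch in the off-diagonal blocks is harmless, because only the symmetric part $\Psi+\Psi^\top$ matters.

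\textbf{Monotonicity.} For two profiles $s,s'$, write $\Delta=s-s'$. Then
\[
(F(s)-F(s'))^\top\Delta=\sum_n-(\nabla V_n(x_n)-\nabla V_n(x_n'))^\top\Delta x_n+\text{(quadratic part of }t\text{)}.
\]
Assumption \ref{valuation_function_assumption} bounds the first sum below by $\alpha\sum_n\|\Delta x_n\|^2$. The $t$-part is linear in $s$, so its contribution equals exactly $\Delta^\top M\Delta$, where $M$ is $-\Psi$ with the $\alpha I$ blocks removed. Combining the two bounds gives $(F(s)-F(s'))^\top\Delta\ge -\Delta^\top\Psi\Delta=-\tfrac12\Delta^\top(\Psi+\Psi^\top)\Delta\ge\tfrac{\upsilon}{2}\|\Delta\|^2$ by (ii). So $F$ is strongly monotone.

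\textbf{Concavity.} The Hessian of $U_n$ with respect to $s_n$ is $\begin{pmatrix}\nabla^2V_n&-B^n_{nn}\\-B^{n\top}_{nn}&-A^n_{nn}\end{pmatrix}$, up to the transpose convention. It is bounded above by $\Psi_{nn}$ once $\nabla^2V_n\preceq-\alpha I$, which follows from twice differentiability and strong concavity. Condition (i) then gives concavity of $U_n$ in $s_n$. Hence the first-order condition $F_n(s^*)^\top(s_n-s_n^*)\ge0$ for all $s_n\in\mathcal{S}_n$ is necessary and sufficient for $s_n^*$ to be a best response. Therefore the NE set coincides with the solution set of $\mathrm{VI}(\mathcal{S},F)$.

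\textbf{Main obstacle: existence.} Uniqueness is immediate from strong monotonicity: two solutions $s,s'$ give $\tfrac{\upsilon}{2}\|s-s'\|^2\le0$. Existence is harder, because $\mathcal{S}$ is not compact, so we cannot invoke Kakutani/Rosen directly. I would first invoke the standard result that a continuous strongly monotone map on a closed convex set has a unique VI solution, as in Facchinei–Pang. Alternatively, we can prove it directly through a coercivity argument: fix $\bar s\in\mathcal{S}$. Strong monotonicity gives $F(s)^\top(s-\bar s)\ge\tfrac{\upsilon}{2}\|s-\bar s\|^2-\|F(\bar s)\|\|s-\bar s\|>0$ for $\|s-\bar s\|$ large. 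Hence the VI restricted to a large ball $\mathcal{S}\cap\bar B_R(\bar s)$ has a solution by Hartman–Stampacchia. That solution lies in the interior of the ball, and is therefore a solution on all of $\mathcal{S}$.
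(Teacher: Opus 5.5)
Your proposal is correct in its main line, but it differs from the paper on existence. The paper shows $J\preceq\Psi$ using $\nabla^2V_n\preceq-\alpha I_K$. It then uses (i) to get concavity of $U_n$ in $s_n$ and cites Rosen's Theorem~1 for existence. For uniqueness it uses (ii) to get $J+J^\top\preceq-\upsilon I_{2KN}$, i.e.\ diagonal strict concavity (Rosen's Theorem~6). Rosen's existence theorem, however, assumes compact strategy sets. $\mathcal{S}_n$ is not compact, because $p_n\in\mathbb{R}^K_{\ge0}$; prices are only capped later, in Assumption~\ref{ass3}. Your VI reformulation with the coercivity/Hartman--Stampacchia argument on $\mathcal{S}\cap\bar B_R(\bar s)$ closes exactly this gap, and both existence and uniqueness of the VI solution come from strong monotonicity. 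Your uniqueness step is the same idea as the paper's, since diagonal strict concavity is strict monotonicity of $F$. The difference is that you get strong monotonicity from the first-order inequality in Assumption~\ref{valuation_function_assumption} and the linearity of $\nabla t_n$, without Hessians. As a side remark, (i) is implied by (ii): $\Psi_{nn}$ is symmetric and $2\Psi_{nn}$ is a principal block of $\Psi+\Psi^\top$, so $\Psi_{nn}\preceq-\tfrac{\upsilon}{2}I_{2K}$.

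One correction: your claim that a transpose mismatch is harmless ``because only $\Psi+\Psi^\top$ matters'' is not valid. The true block $\partial F_n/\partial s_m=\begin{pmatrix}0&B^{n\top}_{mn}\\ B^n_{nm}&A^n_{nm}\end{pmatrix}$ differs from $-\Psi_{nm}$ by transposing the individual $K\times K$ sub-blocks in place. It is not a transpose of $\Psi$ as a whole, so symmetrization does not undo it. For example, $\Delta x_n^\top B^{n\top}_{mn}\Delta p_m$ differs in general from $\Delta x_n^\top B^n_{mn}\Delta p_m$ unless $B^n_{mn}$ is symmetric. Likewise, $\begin{pmatrix}\alpha I&C\\ C^\top&Q\end{pmatrix}\succeq0$ is not equivalent to $\begin{pmatrix}\alpha I&C^\top\\ C&Q\end{pmatrix}\succeq0$ for a general symmetric $Q$. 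Hence your inequality $(F(s)-F(s'))^\top\Delta\ge-\Delta^\top\Psi\Delta$ needs either $\Psi$ read with the transposes of the actual Jacobian, or symmetric $B$ blocks. The latter holds for the design \eqref{eq40}, where every block is a multiple of $I_K$. The paper's proof makes the same identification $J_{nm}=\Psi_{nm}$ without comment, so this is a shared notational slip. Still, you should state this fix rather than rely on the symmetric-part argument.
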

\begin{proof}
Define $J:=(J_{nm})_{n,m\in\mathcal{N}}$ where
\[
J_{nm}:=\frac{\partial^2 U_n(s_n,s_{-n})}{\partial s_n \partial s_m}.
\]
By Assumption \ref{valuation_function_assumption}, we have $\frac{\partial^2 V_n(s_n)}{\partial s_n^2}\preceq -\alpha I_K$, thus
\begin{equation}\label{unique_eq}
     J_{nn}=
     \begin{pmatrix}
        \frac{\partial^2 V_n(s_n)}{\partial s_n^2} & -B_{nn}^n\\
        -B_{nn}^{n^\top}& -A_{nn}^n
    \end{pmatrix}\preceq 
    \Psi_{nn}
\end{equation}
If $B_{nn}^n, A_{nn}^n$ are chosen such that
$\Psi_{nn}\preceq 0$, i.e., $U_n$ is concave with respect to $s_n$, there exists an equilibrium point for the game $\mathcal{G}$ \cite[Theorem 1]{rosen1965existence}.  
From \eqref{unique_eq} and the fact that for $m\neq n$, $J_{nm}=\Psi_{nm}$, it follows that $J\preceq \Psi$; therefore, 
\[\Psi+\Psi^\top\preceq -\upsilon I_{2KN} \Rightarrow J+J^\top\preceq -\upsilon I_{2KN},\]
which is a sufficient condition for the pseudogradient $\textbf{col}(\nabla_{s_1}U_1(s_1, s_{-1}), \ldots, \nabla_{s_N}U_N(s_N, s_{-N}))$ to be diagonally strictly concave \cite[Theorem 6]{rosen1965existence}, and thus, a sufficient condition for the uniqueness of NE of the game $\mathcal{G}$.
\end{proof}
\subsubsection{Nash implementation (P2)}
For P2, conditions need to be imposed on the payment rule $t_n$ such that a NE of $\mathcal{G}$ corresponds to the optimal solution of problem \eqref{manager_problem_eq}. To achieve this, it is necessary that the KKT point of \eqref{manager_problem_eq} is feasible within the KKT conditions of (\ref{utility_maximization_eq}). Theorem \ref{nash_implementation_theorem} determines the LMIs satisfying P2.
\begin{theorem}\label{nash_implementation_theorem}
	The optimal solution of problem \eqref{manager_problem_eq} is a NE of the induced game by \eqref{tax_function_eq}, if the parameters of \eqref{tax_matrices_eq} satisfy
	\begin{equation}\label{nash_implementation_cond_eq}
	\textnormal{P2:}\left\{\begin{array}{l}
		(\romannumeral 1) \sum_{m\in\mathcal{N}}{A_{nm}^n}=0, (\romannumeral 2) \sum_{m\in\mathcal{N}}{B_{mn}^n} = \text{diag}(\zeta^{n}),\\
		(\romannumeral 3) B_{nm}^n=-\text{diag}(\theta^{n}), (\romannumeral 4) a_n^{n}=\text{diag}(\theta^{n})c
		\end{array}\right\}
	\end{equation}
 where $\zeta^{n}, \, \theta^n\in\mathbb{R}_+^{K}$.
\end{theorem}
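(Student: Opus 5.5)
The plan is to verify the Nash condition agent by agent through the KKT conditions of the individual utility maximization problems \eqref{utility_maximization_eq}. First, I write the KKT conditions of \eqref{manager_problem_eq}. Under Assumptions \ref{agent_set_assumption}--\ref{valuation_function_assumption} the problem is a concave program with affine coupling and compact convex local sets, so Slater-type qualification applies (assuming $c$ is such that a strictly feasible point exists). There is then a multiplier $\lambda^o\in\mathbb{R}^K_{\geq 0}$ such that
\[
\nabla V_n(x_n^o)-\lambda^o\in N_{\mathcal{X}_n}(x_n^o)\quad \forall n\in\mathcal{N},
\]
where $N_{\mathcal{X}_n}$ denotes the normal cone of $\mathcal{X}_n$. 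The multiplier also satisfies $\sum_n x_n^o\le c$ and ${\lambda^o}^\top(c-\sum_n x_n^o)=0$. The goal is to exhibit prices $p^*$ such that $s^*=\textbf{col}(x_n^o,p_n^*)_{n}$ satisfies every agent's KKT system for \eqref{utility_maximization_eq} with $s_{-n}=s_{-n}^*$ fixed.

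Second, I compute the partial gradients of \eqref{tax_function_eq}. Using the symmetry of $A^n$, they are
\[
\nabla_{x_n}t_n=\sum_{m}{B_{mn}^n}^\top p_m,\qquad \nabla_{p_n}t_n=\sum_m A^n_{nm}p_m+\sum_m B^n_{nm}x_m+a^n_n .
\]
I choose all prices equal, $p_m^*=q$ for every $m$, with $q\ge 0$ defined componentwise by $\text{diag}(\zeta^n)q=\lambda^o$. For this to be a single vector independent of $n$, I take $\zeta^n$ common across agents, or equivalently normalize prices by the common ratio; this consistency is the point I expect to be most delicate, and I would first check whether the intended reading of (ii) forces it. Condition (ii) then gives $\nabla_{x_n}t_n=\text{diag}(\zeta^n)q=\lambda^o$, since the matrices are diagonal. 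Hence the $x_n$-stationarity condition of agent $n$, namely $\nabla V_n(x_n^o)-\nabla_{x_n}t_n\in N_{\mathcal{X}_n}(x_n^o)$, coincides exactly with the social KKT condition above.

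Third, I check the price block. By (i), $\sum_m A^n_{nm}q=0$. By (iii)--(iv), with (iii) read for all $m$ including $m=n$,
\[
\nabla_{p_n}t_n=\text{diag}(\theta^n)\Big(c-\sum_m x_m^o\Big)\ge 0 .
\]
The KKT conditions for maximizing $-t_n$ over $p_n\ge 0$ require $-\nabla_{p_n}t_n\le 0$, which holds by primal feasibility. They also require $p_n^\top\nabla_{p_n}t_n=0$. Since $\theta^n>0$ and $\zeta^n>0$, this reduces componentwise to ${\lambda^o}^\top(c-\sum x^o)=0$, which is complementary slackness of \eqref{manager_problem_eq}.

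Finally, KKT points are only global maximizers of \eqref{utility_maximization_eq} if $U_n$ is concave in $s_n$. I would invoke condition (i) of Proposition \ref{existence_uniqueness_prop}, $\Psi_{nn}\preceq 0$, which by \eqref{unique_eq} gives this concavity. I would note explicitly that the theorem is to be read jointly with P1, or else add the concavity as a standing hypothesis. With concavity in hand, the KKT conditions are sufficient, so no agent can profitably deviate from $s_n^*$, and $s^*$ is a NE whose allocation equals $x^o$. The main obstacle is making the single price vector $q$ consistent across agents via (ii) and the choice of $\zeta^n$; the remaining steps are direct substitution.
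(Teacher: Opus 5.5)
Your proposal follows essentially the same route as the paper's proof. You impose symmetric prices fixed by $\text{diag}(\zeta^n)\tilde p=\lambda^o$, so that agent $n$'s $x_n$-stationarity reproduces the social KKT condition. You then settle the $p_n$-block through (i), (iii)--(iv), primal feasibility and componentwise complementary slackness, which is what the paper's ``$\Phi$ diagonal'' step reduces to.

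The two points you flag are genuine, and the paper only uses them tacitly. First, (ii) does not force a common $\zeta^n$. The paper's \eqref{eq14} silently treats $(\sum_m B_{mn}^{n\top})^{-1}\lambda^o$ as independent of $n$; this holds for \eqref{eq40}, where $\zeta^n=\alpha\mathbf{1}$ for every $n$. Second, without concavity from $\Psi_{nn}\preceq 0$ the KKT point need not be a best response. For example, $A^n=0$ satisfies (i) but leaves the bilinear term $p_n^\top\text{diag}(\theta^n)x_n$ in $U_n$. The paper's proof of this theorem never checks concavity; it is supplied only later by combining P1 with P2.
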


\begin{proof}
Let $\delta_{\mathcal{X}_n}(\cdot)$ denote the indicator function of the local set $\mathcal{X}_n$, i.e.,
     \[\delta_{\mathcal{X}_n}(x_n)=\begin{cases}
         0,&\text{if} \ x_n\in\mathcal{X}_n,\\
         +\infty,&\text{otherwise}.
     \end{cases}
     \]
 The KKT conditions of problem \eqref{manager_problem_eq} are
		\begin{subequations}\label{eq10}
			\begin{align}
				& \nabla_{x_n} V_n(x_n^o)-\lambda^o-\partial_{x_n} \delta_{\mathcal{X}_n}(x_n^o)=0, \quad \forall n\in\mathcal{N}\\
			  & \lambda^{o\top}\bigg(\sum_{n\in\mathcal{N}}{x_n^{o}}-c\bigg)=0,\\
                & \sum_{n\in\mathcal{N}}{x_n^{o}}-c^{}\leq 0, \,\lambda^{o}\geq 0
			\end{align}
		\end{subequations}
		where $x_n^o$ and $\lambda^o$ are the optimal primal and dual variables of problem \eqref{manager_problem_eq} for $n\in \mathcal{N}$, and $\partial_{x_n}$ denotes the subgradient w.r.t. $x_n$. For each agent $n\in\mathcal{N}$, the KKT conditions of (\ref{utility_maximization_eq}) are
		\begin{subequations}\label{eq11}
			\begin{align}
			\begin{split}
			&\nabla_{s_n} V_n(x_n^*) -\nabla_{s_n}t_n(s_n^*, s_{-n}^*)-\partial_{s_n} \delta_{\mathcal{X}_n}(x_n^*)\\
			&+\nabla_{s_n}(\mu_n^{*\top}p_n^*)=0, \quad \forall n\in\mathcal{N},\\
			\end{split}\\
                & \mu_n^{*\top}p_n^{*}=0, \quad \forall n\in\mathcal{N},\\
                & p_n^{*}\geq 0, \, \mu_n^{*}\geq 0, \quad  \forall n\in\mathcal{N}
		  \end{align}
		\end{subequations}
		where $x_n^*$ and $p_n^*$ are the optimal solution of (\ref{utility_maximization_eq}) given the optimal solutions of the other agents $s_{-n}^*$, and $\mu_n^*$ is the dual variable of $p_n^*\geq 0$. 
 \par
 Let $s^*=\textbf{col}(x^*, p^*)$ denote the NE strategies of agents that satisfy KKT conditions \eqref{eq11}. We will show that by substituting $x^*$ with $x^o$ in (\ref{eq11}a) and by imposing conditions on the parameters \eqref{tax_matrices_eq}, $p^*$ can be found such that $s^*=\textbf{col}(x^o, p^*)$ satisfies KKT conditions \eqref{eq11}. In what follows, conditions on the payment functions are derived such that the optimal solution of \eqref{manager_problem_eq} is implemented at a NE with symmetric pricing,  
 i.e.,  
 $p_n^*=p_{m}^*=\tilde{p}, \, \forall n,{m}\in\mathcal{N}$.
 
 By substituting $x^o$ and $\tilde{p}$ for all $n\in\mathcal{N}$ with their counterparts in \eqref{eq11}, and taking the partial derivation in equation (\ref{eq11}a)  w.r.t. $x_n$ and $p_n$, we obtain
		\begin{subequations}\label{eq12}
			\begin{align}
				&\nabla_{x_n}V_n(x_n^o)=\sum_{m\in\mathcal{N}}{B_{mn}^{n^\top}\tilde{p}}+\partial_{x_n} \delta_{\mathcal{X}_n}(x_n^o)\\
				&\sum_{m\in\mathcal{N}}{A_{nm}^{n}\tilde{p}}+\sum_{m\in\mathcal{N}}{B_{nm}^nx_m^o}+a_n^{n}-\mu_n^*=0
			\end{align}		
		\end{subequations}
Comparing (\ref{eq12}a) and (\ref{eq10}a) we derive
    \begin{equation}\label{eq13}
    \sum_{m\in\mathcal{N}}{B_{mn}^{n^\top}\tilde{p}}=\lambda^o, \quad \forall n\in\mathcal{N}.
    \end{equation}
    By requiring $\sum_{m\in\mathcal{N}}{B_{mn}^{n^\top}}\succ 0$, $\tilde{p}$ can be calculated by 
  \begin{equation}\label{eq14}
      \tilde{p}=(\sum_{m\in\mathcal{N}}{B_{mn}^{n^\top}})^{-1}\lambda^o,
  \end{equation}
  and since $\lambda^o\geq 0$, for $p_n^{*}=\tilde{p}\geq 0$ in (\ref{eq11}c) to be satisfied, \[\sum_{m\in\mathcal{N}}{B_{mn}^{n^\top}}\in\mathbb{R}_{\geq 0}^{K\times K}\] must hold.
		
  Now, by applying (\ref{nash_implementation_cond_eq}$\romannumeral 1$) to (\ref{eq12}b), we obtain \[\sum_{m\in\mathcal{N}}{B_{nm}^n}x_n^o+a_n^{n}=\mu_n^*.\]
  By choosing $B_{nm}^n=-\tilde{B}^n$ and $a_n^n=\tilde{B}^n c$ where $\tilde{B}^n\in\mathbb{R}_{\geq 0}^{K\times K}$, we have
  \begin{equation}\label{eq16}
      -\tilde{B}^n(\sum_{n\in\mathcal{N}}{x_n^o}-c)=\mu_n^*,
  \end{equation} 
  which satisfies $\mu_n^{*}\geq 0$ in (\ref{eq11}c). 

 Considering \eqref{eq14} and \eqref{eq16}, (\ref{eq11}b) is equivalent to \[-\lambda^{o\top}(\sum_{m\in\mathcal{N}}{B_{mn}^{n}})^{-1}\tilde{B}^n(\sum_{n\in\mathcal{N}}{x_n^o}-c)=0,\] which can be rewritten as
\begin{equation}\label{eq_17}
\begin{aligned}
    \begin{split}
        \lambda^{o\top}\Lambda_\Phi(\sum_{n\in\mathcal{N}}{x_n^o}-c) + \lambda^{o\top}(\Phi-\Lambda_\Phi)(\sum_{n\in\mathcal{N}}{x_n^o}-c)=0
    \end{split}
\end{aligned}
\end{equation}
where, $\Phi=(\sum_{m\in\mathcal{N}}{B_{mn}^{n}})^{-1}\tilde{B}^n$ and $\Lambda_\Phi$ is the diagonal part of matrix $\Phi$. From (\ref{eq10}b) and (\ref{eq10}c), it is evident that \[\lambda^{o(k)}(\sum_{n\in\mathcal{N}}{x_n^{o(k)}}-c^{(k)})=0,\quad \forall k\in\mathcal{K},\] meaning that the first term in \eqref{eq_17} is equal to zero. Considering (\ref{eq10}c) and the fact that all entries of the matrix $\Phi$ are non-negative we have

\[\lambda^{o\top}(\Phi-\Lambda_\Phi)(\sum_{n\in\mathcal{N}}{x_n^o}-c)\le 0,\] and the equality is attained only with 
$\Phi = \Lambda_\Phi$. Therefore, under conditions (\ref{nash_implementation_cond_eq}$\romannumeral 2$)-(\ref{nash_implementation_cond_eq}$\romannumeral 4$), the KKT conditions (\ref{eq11}b) and (\ref{eq11}c) are satisfied. 

Thus, by applying  \eqref{nash_implementation_cond_eq} to the parameters of the payment functions, a NE strategy for the agents is $s_n^*=\textbf{col}(x_n^o, \tilde{p})$ for all $n\in\mathcal{N}$. This completes the proof.
\end{proof}
From P1 and P2 one can deduce that the optimal solution of problem (1) is implemented on the unique NE of $\mathcal G$. Thus, the mechanism achieves strong Nash implementation.
\subsubsection{Budget balance (P3)}
Theorem \ref{theorem3} provides conditions that result in the budget balance (P3) property of the mechanism.
\begin{theorem}\label{theorem3}
	If the parameters of \eqref{tax_matrices_eq} satisfy
	\begin{equation}\label{eq19}
		\textnormal{P3:} \left\{\begin{array}{l}
		 (\romannumeral 1) \sum_{m\in\mathcal{N}\backslash n}{A_{nn}^m}=A_{nn}^n, \sum_{m\in\mathcal{N}\backslash n}{B_{nn}^m}=B_{nn}^n\\
		 (\romannumeral 2) A_{ml}^n=0, \, B_{ml}^n=0, \, m\neq l, \ m,l\in \mathcal{N}\backslash n,\\
		 (\romannumeral 3) \sum_{m\in\mathcal{N}\backslash n}{a_n^m}=-\frac{1}{N}\sum_{m\in\mathcal{N}}{B_{nm}^m}c
		\end{array}\right\},
	\end{equation}
	the mechanism is budget balanced, which implies that $\sum_{n\in\mathcal{N}}{t_n(s_n^*, s_{-n}^*)}=0$.
\end{theorem}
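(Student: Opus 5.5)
The plan is to evaluate $\sum_{n\in\mathcal{N}} t_n$ directly at the equilibrium constructed in Theorem \ref{nash_implementation_theorem}. There, $x^*=x^o$ and every agent proposes the same price, $p_n^*=\tilde p$, so $p^*=\mathbf{1}_N\otimes\tilde p$. I will assume that the conditions P2 of \eqref{nash_implementation_cond_eq} hold alongside \eqref{eq19}, since the result is about the budget at that particular NE. With $\bar A:=\sum_n A^n$, $\bar B:=\sum_n B^n$ and $\bar a:=\sum_n a^n$, the total payment splits into three terms,
\[
\sum_{n}t_n=\tfrac12\, p^{*\top}\bar A\, p^*+p^{*\top}\bar B\, x^o+p^{*\top}\bar a .
\]
I will show that the quadratic term vanishes identically. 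The linear and constant terms then combine into a quantity that is killed by complementary slackness (\ref{eq10}b).

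\textbf{Quadratic term.} By (\ref{eq19}ii), an off-diagonal block $A^n_{ml}$ with $m\neq l$ is nonzero only when $n\in\{m,l\}$, so $\bar A_{ml}=A^m_{ml}+A^l_{ml}$. By (\ref{eq19}i), each diagonal block satisfies $\bar A_{mm}=2A^m_{mm}$. I will regroup $\tilde p^\top(\sum_{m,l}\bar A_{ml})\tilde p$ by the agent index $n$ that owns each block. Using the symmetry of $A^n$, so that $\tilde p^\top A^n_{mn}\tilde p=\tilde p^\top A^n_{nm}\tilde p$, the contribution of agent $n$ becomes $2\tilde p^\top\big(\sum_m A^n_{nm}\big)\tilde p$. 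This is zero by (\ref{nash_implementation_cond_eq}i).

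\textbf{Linear and constant terms.} The same bookkeeping with (\ref{eq19}i)--(ii) applied to $B$ gives, for each $l$, the coefficient of $x^o_l$ as $\sum_m B^l_{ml}+\sum_m B^m_{ml}$. The first sum equals $\mathrm{diag}(\zeta^l)$ by (\ref{nash_implementation_cond_eq}ii). The second is expressed through the $\theta$'s via (\ref{nash_implementation_cond_eq}iii). For the constant term, I write $\bar a$ blockwise as $a^n_n+\sum_{m\neq n}a^m_n$ and substitute (\ref{nash_implementation_cond_eq}iv) and (\ref{eq19}iii). The aim is to reach a form
\[
\sum_n t_n=\tilde p^\top D\Big(\sum_n x^o_n-c\Big),
\]
with $D$ diagonal and built from $\zeta^n$ and $\theta^n$. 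Next I use \eqref{eq13}, which under P2 reads $\mathrm{diag}(\zeta^n)\tilde p=\lambda^o$ for every $n$. This shows that componentwise $\tilde p^{(k)}\neq0$ only where $\lambda^{o(k)}\neq 0$. Complementary slackness in the componentwise form $\lambda^{o(k)}(\sum_n x_n^{o(k)}-c^{(k)})=0$, already used in the proof of Theorem \ref{nash_implementation_theorem}, then makes every component of the product vanish.

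\textbf{Main obstacle.} I expect the difficulty to lie in the index bookkeeping of the $B$ and $a$ terms. One must verify that the $-\frac1N\sum_m B^m_{nm}c$ in (\ref{eq19}iii) exactly cancels the $x$-independent part left after matching with the $x^o$-coefficients, so that $x^o$ and $c$ appear only through the combination $\sum_n x^o_n-c$. I will first treat the case $K=1$ with scalar blocks, to confirm that the cancellation goes through. I will then lift the argument to $K>1$, using that all the relevant matrices are diagonal and hence commute.
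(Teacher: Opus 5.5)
Your argument is correct, but it is organized differently from the paper's. The paper never splits $\sum_n t_n$ into quadratic, linear and constant parts, and never substitutes the P2 parameter values. Instead, it subtracts from each $t_n$ the agent's own complementary-slackness identity \eqref{eq21} (i.e.\ $\tilde p^\top\mu_n^*=0$), which removes agent $n$'s own block-row terms. It then uses \eqref{eq13} to replace $\tilde p^\top\sum_m B^n_{mn}x_n^o$ by $\lambda^{o\top}x_n^o$, and lets P3(i)--(ii) cancel the remaining $A^n_{nn}$, $B^n_{nn}$ and cross terms after summing over $n$. It finishes with $\lambda^{o\top}\sum_n x_n^o=\lambda^{o\top}c$, the summed form of \eqref{eq13}, and P3(iii).

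Your route plugs in P2(i)--(iv) directly, which buys transparency. The quadratic part vanishes identically for every common price, not just at the NE. The total then reduces to $\lambda^{o\top}(\sum_n x_n^o-c)-\tilde p^\top\Theta(\sum_n x_n^o-c)$ with $\Theta=\sum_n\mathrm{diag}(\theta^n)$. Your componentwise support argument for the $\Theta$-term is exactly the summed version of $\tilde p^\top\mu_n^*=0$ from the proof of Theorem~\ref{nash_implementation_theorem}. The paper's route, in turn, uses only properties of the equilibrium: common price, $x^*=x^o$, \eqref{eq13}, and the agents' KKT system. So it would carry over to any parameter choice that produces such an equilibrium, not just the specific forms in P2.

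One adjustment to your bookkeeping. The $\theta$-contributions ($-\Theta\sum_l x_l^o$ from $\bar B$ and $+\Theta c$ from $\sum_n a_n^n$) combine identically into $-\tilde p^\top\Theta(\sum_n x_n^o-c)$. The $\zeta$-contributions, however, are $\sum_l\tilde p^\top\mathrm{diag}(\zeta^l)x_l^o-\frac{1}{N}\sum_l\tilde p^\top\mathrm{diag}(\zeta^l)c$. Note that P2(ii) applies only after summing P3(iii) over $n$, since $\sum_m B^m_{nm}$ for a single $n$ is not fixed by P2.

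If the $\zeta^l$ differ, this $\zeta$-piece is not of the form $\tilde p^\top D(\sum_n x_n^o-c)$ as an algebraic identity. So you cannot postpone \eqref{eq13} to the support step. Use it in full, $\mathrm{diag}(\zeta^l)\tilde p=\lambda^o$ for every $l$, to turn this piece into $\lambda^{o\top}(\sum_n x_n^o-c)$, which vanishes directly by (\ref{eq10}b). The support property is then needed only for the $\Theta$-term, and it relies on $\zeta^n$ being strictly positive, as required in Theorem~\ref{nash_implementation_theorem}.
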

\begin{proof}
	The budget balance property is investigated at the NE point $s_n^*=\textbf{col}(x_n^o, \tilde{p}), \, \forall n\in\mathcal{N}$, where
	\begin{equation}\label{eq20}
	\begin{aligned}
	\begin{split}
	t_n(s_n^*, s_{-n}^*)&=\frac{1}{2}\tilde{p}^{\top}\sum_{m\in\mathcal{N}}{\left[\sum_{l\in\mathcal{N}}{(A_{ml}^n\tilde{p}+2B_{ml}^nx_l^o)}+2a_m^n\right]}\\
	\end{split}
	\end{aligned}.	
	\end{equation}
	From (\ref{eq11}b) and (\ref{eq12}b), it follows that
	\begin{equation}\label{eq21}
		\tilde{p}^\top\left(\sum_{m\in\mathcal{N}}{A_{nm}^{n}\tilde{p}}+\sum_{m\in\mathcal{N}}{B_{nm}^nx_m^o}+a_n^{n}\right)=0.
	\end{equation}
	By adding and subtracting $\frac{1}{2}\tilde{p}^{\top}A_{nn}^n\tilde{p}$ to the right-hand side of (\ref{eq20}) and using \eqref{eq21}, we have
	\begin{equation}\label{eq22}
	\begin{aligned}
	\begin{split}
	t_n(s_n^*, s_{-n}^*)&=-\frac{1}{2}\tilde{p}^{\top}A_{nn}^n\tilde{p}+\frac{1}{2}\tilde{p}^{\top}\sum_{m\in\mathcal{N}\backslash n}{\sum_{l\in\mathcal{N}\backslash n}{A_{ml}^n}}\tilde{p}\\&
	+\tilde{p}^{\top}\sum_{m\in\mathcal{N}\backslash n}{\sum_{l\in\mathcal{N}}{B_{ml}^nx_l^o}}+\tilde{p}^{\top}\sum_{m\in\mathcal{N}\backslash n}{a_m^n}\\
	\end{split}
	\end{aligned}	
	\end{equation}
	From (\ref{eq13}), we derive that \[x_n^{o\top}(\sum_{m\in\mathcal{N}}{B_{mn}^{n^\top}\tilde{p}})= x_n^{o\top}\lambda^o.\] By adding and subtracting $\tilde{p}^{\top}B_{nn}^nx_n^o$ to \eqref{eq22}, we obtain
	\begin{equation}\label{eq23}
	\begin{aligned}
	\begin{split}
	t_n(s_n^*, &s_{-n}^*)=-\frac{1}{2}\tilde{p}^{\top}A_{nn}^n\tilde{p}-\tilde{p}^{\top}B_{nn}^nx_n^o+\lambda^{o\top}x_n^o\\&+\frac{1}{2}\tilde{p}^{\top}\sum_{m\in\mathcal{N}\backslash n}{\sum_{l\in\mathcal{N}\backslash n}{A_{ml}^n}}\tilde{p}\\&+\tilde{p}^{\top}\sum_{m\in\mathcal{N}\backslash n}{\sum_{l\in\mathcal{N}\backslash n}{B_{ml}^nx_l^o}}+\tilde{p}^{\top}\sum_{m\in\mathcal{N}\backslash n}{a_m^n}\\
	\end{split}
	\end{aligned}	
	\end{equation}
    To remove the quadratic terms w.r.t. $s_n$ in \eqref{eq23} and the terms $\tilde{p}^{\top} A_{ml}^n\tilde{p}$ and $\tilde{p}^{\top} B_{ml}^nx_l^o$, the following must hold:
    \begin{equation}\label{eq24}
    \begin{split}
            \sum_{m\in\mathcal{N}\backslash n}{A_{nn}^m}=A_{nn}^n, \,
            \sum_{m\in\mathcal{N}\backslash n}{B_{nn}^m}=B_{nn}^n,\\ 
            A_{ml}^n=0, \, B_{ml}^n=0, \quad m\neq l, \ m,l\in \mathcal{N}\backslash n.
    \end{split}
    \end{equation}
    Summing (\ref{eq23}) over $n\in\mathcal{N}$, and using \eqref{eq24}, we obtain
	\[
		\sum_{n\in\mathcal{N}}{t_n(s_n^*, s_{-n}^*)}=\lambda^{o\top}\sum_{n\in\mathcal{N}}{x_n^{*}}+\tilde{p}^{\top}\sum_{n\in\mathcal{N}}{\sum_{m\in\mathcal{N}\backslash n}{a_n^m}},
	\]
	and using the fact that at the NE, (\ref{eq10}b) holds, we have
	\begin{equation}\label{eq27}
	\sum_{n\in\mathcal{N}}{t_n(s_n^*, s_{-n}^*)}=\lambda^{o\top}c+\tilde{p}^{\top}\sum_{n\in\mathcal{N}}{\sum_{m\in\mathcal{N}\backslash n}{a_n^m}}
	\end{equation}
	By (\ref{eq13}), we derive that $\sum_{n\in\mathcal{N}}{\sum_{m\in\mathcal{N}}{B_{nm}^{m^\top}}\tilde{p}} = N\lambda^o$. Using this equation, a condition for Equation \eqref{eq27} to be zero is (\ref{eq19}$\romannumeral 3$). 
\end{proof}
%%%%%%%%%%%%%%%%%%%%%%%
\subsubsection{Individual rationality (P4)} The following proposition gives the sufficient conditions for individual rationality of the mechanism, that is $U_n(s_n^*,s_{-n}^*)\geq V_n(0)=0$.
\begin{proposition}\label{IR_prop}
    The mechanism is individually rational if the parameters of \eqref{tax_matrices_eq} satisfy
    \begin{equation}\label{IR_cond_eq}
        \textnormal{P4:} \left\{\begin{array}{l}
		 (\romannumeral 1) \sum_{m\in\mathcal{N}}{\sum_{l\in\mathcal{N}}{A_{ml}^n}}\preceq 0, (\romannumeral 2) \sum_{m\in\mathcal{N}}{a_{m}^n}\leq 0,\\
		 (\romannumeral 3) -B_{ml}^n\in\mathbb{R}_{\geq 0}^{K\times K}, \, m\in\mathcal{N}, \ l\in \mathcal{N}\backslash n\\
		\end{array}\right\}.
    \end{equation}
    
\end{proposition}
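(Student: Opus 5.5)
The plan is to evaluate the utility of agent $n$ directly at the equilibrium identified in Theorem \ref{nash_implementation_theorem}, i.e., $s_m^*=\textbf{col}(x_m^o,\tilde{p})$ for all $m\in\mathcal{N}$. I will bound the payment $t_n(s_n^*,s_{-n}^*)$ from above by the ``competitive payment'' $\lambda^{o\top}x_n^o$. Then I will use the KKT conditions \eqref{eq10} to show that $V_n(x_n^o)-\lambda^{o\top}x_n^o\geq V_n(0)=0$. Since Proposition \ref{existence_uniqueness_prop} gives uniqueness of the NE, this equilibrium is the only one to check.

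\textbf{Step 1 (expand the payment).} Substituting $p_m=\tilde{p}$ and $x_m=x_m^o$ into \eqref{tax_function_eq} gives
\[
t_n(s^*)=\tfrac12\tilde{p}^\top\Big(\sum_{m,l\in\mathcal{N}}A_{ml}^n\Big)\tilde{p}+\tilde{p}^\top\sum_{m\in\mathcal{N}}\sum_{l\in\mathcal{N}}B_{ml}^n x_l^o+\tilde{p}^\top\sum_{m\in\mathcal{N}}a_m^n .
\]
The first term is nonpositive by (\ref{IR_cond_eq}$\romannumeral 1$). The last term is nonpositive by (\ref{IR_cond_eq}$\romannumeral 2$) together with $\tilde{p}\geq 0$.

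\textbf{Step 2 (split the bilinear term).} I split the middle sum into the terms with $l\neq n$ and those with $l=n$.
\begin{itemize}
\item For $l\neq n$: the matrix $-B_{ml}^n$ is entrywise nonnegative by (\ref{IR_cond_eq}$\romannumeral 3$), and $x_l^o\in\mathcal{X}_l\subset\mathbb{R}^K_{\geq0}$ and $\tilde{p}\geq0$. Hence $\tilde{p}^\top B_{ml}^n x_l^o\leq 0$.
\item For $l=n$: by \eqref{eq13}, $\tilde{p}^\top\sum_{m}B_{mn}^n x_n^o=x_n^{o\top}\sum_m B_{mn}^{n\top}\tilde{p}=\lambda^{o\top}x_n^o$.
\end{itemize}
Combining the two steps yields $t_n(s^*)\leq\lambda^{o\top}x_n^o$. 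Therefore $U_n(s^*)\geq V_n(x_n^o)-\lambda^{o\top}x_n^o$.

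\textbf{Step 3 (optimality of $x_n^o$ at the price $\lambda^o$).} Condition (\ref{eq10}a) states that $0\in\nabla V_n(x_n^o)-\lambda^o-\partial\delta_{\mathcal{X}_n}(x_n^o)$. This is exactly the optimality condition for the concave problem $\max_{x_n\in\mathcal{X}_n}V_n(x_n)-\lambda^{o\top}x_n$, which is concave by Assumption \ref{valuation_function_assumption} and has a convex feasible set by Assumption \ref{agent_set_assumption}. Hence $x_n^o$ maximizes this function over $\mathcal{X}_n$. Comparing with the point $x_n=0$ gives $V_n(x_n^o)-\lambda^{o\top}x_n^o\geq V_n(0)=0$, which proves P4.

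\textbf{Main obstacle.} The delicate point is Step 3: comparing with $x_n=0$ requires $0\in\mathcal{X}_n$. This is implicit in the paper's use of $V_n(0)$ as the nonparticipation payoff, and I would state it explicitly. If $0\notin\mathcal{X}_n$, I would instead use the concavity inequality $V_n(0)\leq V_n(x_n^o)-\nabla V_n(x_n^o)^\top x_n^o$. That route would require controlling the normal-cone term in (\ref{eq10}a), which is where the argument could fail.
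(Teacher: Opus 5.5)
Your argument is correct, but it takes a different route from the paper. The paper uses a unilateral-deviation argument. It considers $s_n'=\textbf{col}(0,\tilde p)$ and observes that at $x_n=0$ every bilinear term with $l=n$ vanishes. Conditions (i)--(iii) of \eqref{IR_cond_eq} then give $t_n(s_n',s_{-n}^*)\le 0$ directly, and the Nash inequality $U_n(s_n^*,s_{-n}^*)\ge U_n(s_n',s_{-n}^*)\ge V_n(0)=0$ finishes the proof.

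You instead evaluate the payment at the equilibrium itself. You convert the $l=n$ column into the charge $\lambda^{o\top}x_n^o$ through \eqref{eq13}, the same identity the paper uses in the proof of Theorem \ref{theorem3}. You then use the centralized stationarity condition (\ref{eq10}a) with concavity to show that the agent's surplus at the shadow prices is nonnegative.

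\textbf{What the paper's route buys.} It is shorter and needs neither \eqref{eq13} nor the multiplier $\lambda^o$. It also shows why condition (iii) excludes $l=n$: by condition (ii) of \eqref{nash_implementation_cond_eq} those blocks sum to $\text{diag}(\zeta^n)$, which has a positive diagonal, so they cannot all be entrywise nonpositive.

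\textbf{What your route buys.} It never uses the best-response inequality, and it gives a more informative, economically interpretable conclusion: $t_n(s^*)\le\lambda^{o\top}x_n^o$ and $U_n(s^*)\ge V_n(x_n^o)-\lambda^{o\top}x_n^o\ge 0$. Each agent pays at most the competitive price of its bundle and keeps at least its competitive surplus.

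\textbf{On $0\in\mathcal{X}_n$.} This is not an additional obstacle. The paper's deviation $s_n'$ must be feasible for exactly the same reason. The requirement is already implicit in Assumption \ref{valuation_function_assumption}, where $V_n$ is defined on $\mathcal{X}_n$ and $V_n(0)=0$ is imposed. Your fallback via the concavity inequality is therefore unnecessary.
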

\begin{proof}
    Let $s_n'=\textbf{col}(0, \tilde{p})$. We will impose conditions on parameters \eqref{tax_matrices_eq} such that $U_n(s_n', s_{-n}^*)\geq 0$. Using the fact that $V_n(0)=0$, it is sufficient to show that 
    \begin{equation*}
    \begin{aligned}
        \begin{split}
            t_n(s_n', s_{-n}^*)&=\frac{1}{2}\tilde{p}^\top \sum_{m\in\mathcal{N}}{\sum_{l\in\mathcal{N}}{A_{ml}^n}}\tilde{p} \\&+ \tilde{p}^\top\sum_{m\in\mathcal{N}}{\sum_{l\in\mathcal{N}\backslash n}{B_{ml}^n x_l^o}} + \tilde{p}^\top\sum_{m\in\mathcal{N}}{a_m^n}
        \end{split}
    \end{aligned}
    \end{equation*}
    is non-positive. The quadratic term w.r.t. $\tilde{p}$ has only non-positive values if (\ref{IR_cond_eq}$\romannumeral 1$) is satisfied. Since $\tilde{p}$ and $x_l^o, \, \forall l\in\mathcal{N}$, are non-negative vectors, for the other terms to be non-positive, all entries of the vector $\sum_{m\in\mathcal{N}}{a_m^n}$ and the matrices $B_{ml}^n$ must be non-positive, i.e., (\ref{IR_cond_eq}$\romannumeral 2$) and (\ref{IR_cond_eq}$\romannumeral 3$) must hold. Using the definition of NE, we have \begin{equation}U_n(s_n^*, s_{-n}^*)\geq U_n(s_n',s_{-n}^*)\geq 0,\end{equation} and therefore, the mechanism has the individual rationality property.
\end{proof}

\subsection{A feasible solution to linear matrix inequalities}
In this section, a feasible solution satisfying the LMI conditions of Propositions \ref{existence_uniqueness_prop}-\ref{IR_prop} and Theorems \ref{nash_implementation_theorem}-\ref{theorem3} from Section \ref{LMI_subsec} is found; hence, a quadratic payment function is designed that results in the desired properties P1-P4. The following proposition introduces such a payment function.
\begin{proposition}\label{prop2}
	The payment function in \eqref{eq40} satisfies the LMIs \eqref{existence_uniqueness_cond_eq}, \eqref{nash_implementation_cond_eq}, \eqref{eq19}, and \eqref{IR_cond_eq}, and hence results in a mechanism satisfying properties P1–P4.
	\begin{equation}\label{eq40}
	\begin{aligned}
	\begin{split}
	t_n(&s_n, s_{-n})=\alpha\bigg[\frac{1}{2}p_n^\top(p_n-\frac{2}{N-1}\bar{p}_{-n})\\&-\frac{1}{N-1}p_n^\top(x_n+\bar{x}_{-n}-c)+\frac{N}{(N-1)^2}\bar{p}_{-n}^\top(x_n-\frac{c}{N})\\&+\frac{1}{2(N-1)}\bar{\sigma}_{-n}^p-\frac{1}{(N-1)^2}(\bar{\sigma}_{-n}^{px}-\bar{p}_{-n}^\top\frac{c}{N})\bigg]\\
	\end{split}
	\end{aligned}
	\end{equation}
	where 
    \begin{align*}
        &\bar{p}_{-n}=\sum_{m\in\mathcal{N}\backslash n}{p_m}, \quad \bar{x}_{-n}=\sum_{m\in\mathcal{N}\backslash n}{x_m},\\
        &\bar{\sigma}_{-n}^p=\sum_{m\in\mathcal{N}\backslash n}{p_m^\top p_m},\quad \bar{\sigma}_{-n}^{px}=\sum_{m\in\mathcal{N}\backslash n}{p_m^\top x_m}.
    \end{align*}
\end{proposition}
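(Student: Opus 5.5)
The plan is to read off the coefficient blocks $A^n_{ml}$, $B^n_{ml}$, $a^n_m$ of the quadratic form \eqref{tax_function_eq} that \eqref{eq40} corresponds to. We then check each of \eqref{existence_uniqueness_cond_eq}, \eqref{nash_implementation_cond_eq}, \eqref{eq19}, \eqref{IR_cond_eq} block by block. Once they hold, P1–P4 follow from Propositions \ref{existence_uniqueness_prop}, \ref{IR_prop} and Theorems \ref{nash_implementation_theorem}, \ref{theorem3}.

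Matching terms in \eqref{eq40}, with the cross term $p_n^\top p_m$ split symmetrically, I expect, for $m\neq n$:
\[
A^n_{nn}=\alpha I_K,\quad A^n_{nm}=A^n_{mn}=-\tfrac{\alpha}{N-1}I_K,\quad A^n_{mm}=\tfrac{\alpha}{N-1}I_K,
\]
\[
B^n_{nl}=-\tfrac{\alpha}{N-1}I_K \ (\forall l),\quad B^n_{mn}=\tfrac{\alpha N}{(N-1)^2}I_K,\quad B^n_{mm}=-\tfrac{\alpha}{(N-1)^2}I_K,
\]
\[
a^n_n=\tfrac{\alpha}{N-1}c,\qquad a^n_m=-\tfrac{\alpha}{N(N-1)}c,
\]
and all other blocks zero.

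With these blocks the linear conditions are direct sums.
\begin{itemize}
\item P2(i): $\alpha-(N-1)\tfrac{\alpha}{N-1}=0$.
\item P2(ii): $-\tfrac{\alpha}{N-1}+(N-1)\tfrac{\alpha N}{(N-1)^2}=\alpha$, so $\zeta^n=\alpha\mathbf 1$.
\item P2(iii)–(iv): these hold with $\theta^n=\tfrac{\alpha}{N-1}\mathbf 1$.
\item P3(i): $(N-1)\tfrac{\alpha}{N-1}=A^n_{nn}$ and $(N-1)\bigl(-\tfrac{\alpha}{(N-1)^2}\bigr)=B^n_{nn}$.
\item P3(ii): this holds because no off-diagonal $m\neq l$ terms among $m,l\neq n$ appear.
\item P3(iii): the left side is $-\tfrac{\alpha}{N}c$. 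On the right, $\sum_m B^m_{nm}=-\tfrac{\alpha}{N-1}+\tfrac{\alpha N}{N-1}=\alpha$, so it equals $-\tfrac{\alpha}{N}c$ as well.
\item P4(i): $\alpha-2\alpha+\alpha=0$.
\item P4(iii): every $B^n_{ml}$ with $l\neq n$ is a nonpositive multiple of $I_K$ or zero.
\end{itemize}

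P1(i) is also easy. $-\Psi_{nn}=\alpha\begin{pmatrix}I&-\frac{1}{N-1}I\\-\frac{1}{N-1}I&I\end{pmatrix}$ has eigenvalues $\alpha(1\pm\frac1{N-1})\ge0$ for $N\ge2$.

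For P1(ii), every block of $\Psi$ is a multiple of $I_K$ arranged with a diagonal/off-diagonal pattern in the agent index. Hence $\Psi+\Psi^\top=(D-O)\otimes I_N\otimes I_K + O\otimes\mathbf 1\mathbf 1^\top\otimes I_K$ up to reordering, with $2\times2$ matrices $D,O$. Diagonalizing $\mathbf 1\mathbf 1^\top$ reduces the condition to two $2\times2$ matrices. The first is $D-O$, on $\mathbf 1^\perp$. The second is $D+(N-1)O$, on $\mathbf 1$. I would compute their eigenvalues explicitly and pick $\upsilon$ as the smallest magnitude, possibly requiring $N\ge3$. I expect this to be the main obstacle, since strict negativity is not obvious from the signs of the entries.

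A second delicate point is P4(ii). My bookkeeping gives $\sum_m a^n_m=\tfrac{\alpha}{N(N-1)}c$, which is not $\le0$. If that is correct, I would bypass the sufficient condition and redo the argument of Proposition \ref{IR_prop} directly for \eqref{eq40}. That means evaluating $t_n(\textbf{col}(0,\tilde p),s^*_{-n})$ with $\tilde p=\lambda^o/\alpha$, which follows from \eqref{eq14} with $\zeta^n=\alpha\mathbf 1$. I would then use complementary slackness $\tilde p^\top(\sum_l x^o_l-c)=0$ to absorb the positive $c$-term into the negative $\tilde p^\top\bar x^o_{-n}$ contributions, and conclude $U_n(s^*)\ge U_n(\textbf{col}(0,\tilde p),s^*_{-n})\ge0$.
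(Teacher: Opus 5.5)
Your identification of the blocks $A^n_{ml}$, $B^n_{ml}$, $a^n_m$ is correct. P1(i), P2(i)--(iv), P3(i)--(iii), P4(i) and P4(iii) hold exactly as you list them, and this substitution is all the paper's one-line proof does. However, the two items you left open do not close: the statement is false as written on both counts.

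\textbf{P1(ii) fails.} Finish your own reduction. With $r=\frac{1}{N-1}$, the diagonal and off-diagonal blocks of $\Psi+\Psi^\top$ are
$D=\alpha\begin{pmatrix}-2&2r\\ 2r&-2\end{pmatrix}$ and $O=\alpha\begin{pmatrix}0&-r^2\\ -r^2&2r\end{pmatrix}$,
using $\frac{N}{(N-1)^2}-\frac{1}{N-1}=r^2$. The block on $\mathbf 1$ is therefore
\[
D+(N-1)O=\alpha\begin{pmatrix}-2&r\\ r&0\end{pmatrix},\qquad \det\bigl(D+(N-1)O\bigr)=-\alpha^2r^2<0 .
\]
Hence $\Psi+\Psi^\top$ has the positive eigenvalue $\alpha(\sqrt{1+r^2}-1)$ for every $N\ge 2$. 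Concretely, if every agent perturbs by the same $(\delta x,\delta p)$, then $\delta s^\top(\Psi+\Psi^\top)\delta s=2N\alpha\,(r\,\delta x^\top\delta p-\|\delta x\|^2)$. This is positive for, e.g., $\delta x=\frac r2\delta p\neq 0$. So no $\upsilon>0$ exists, taking $N\ge 3$ does not help, and P1 cannot be obtained from Proposition \ref{existence_uniqueness_prop}.

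Uniqueness needs a direct argument instead. The $p_n$-KKT condition reads $p_n=[r(\bar p_{-n}+\sum_l x_l-c)]_+$. This forces all prices to equal some $\tilde p\ge 0$ with $\tilde p\perp(\sum_l x_l-c)\le 0$. The $x_n$-KKT conditions then become exactly \eqref{eq10} with $\lambda=\alpha\tilde p$. This pins down $x^*=x^o$, but it pins down $p^*$ only when $\lambda^o$ is unique.

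\textbf{P4 fails, and so does your fallback.} You are right that $\sum_m a^n_m=\frac{\alpha}{N(N-1)}c$ is not $\le 0$. Substituting directly gives
$t_n(\textbf{col}(0,\tilde p),s^*_{-n})=\alpha\bigl[\frac{1}{N(N-1)}\tilde p^\top c-\frac{N}{(N-1)^2}\tilde p^\top\bar x^o_{-n}\bigr]$.
Complementary slackness only yields $\tilde p^\top\bar x^o_{-n}=\tilde p^\top(c-x^o_n)$. This is $0$ when agent $n$ takes all of the priced resources, which leaves a strictly positive deviation payment.

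No other deviation rescues IR, because IR genuinely fails. At the equilibrium,
$t_n(s^*)=\frac{N^2-N+1}{(N-1)^2}\lambda^{o\top}(x^o_n-\frac{c}{N})$.
Take $N=2$, $K=1$, $c=1$, $\mathcal X_n=[0,2]$, $V_1(x)=3\alpha x-\frac{\alpha}{2}x^2$ and $V_2(x)=\alpha x-\frac{\alpha}{2}x^2$. Then $x^o=(1,0)$, $\lambda^o=2\alpha$, $\tilde p=2$, and $t_1(s^*)=3\alpha>V_1(1)=\frac{5}{2}\alpha$. In fact $\max_{s_1}U_1(s_1,s_2^*)=-\frac{\alpha}{2}<0$.

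So \eqref{eq40} violates both P1(ii) and P4. A correct write-up must report this, or modify the payment, rather than conclude P1--P4.
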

\begin{proof}
	The proof can be straightforwardly deduced by directly applying conditions \eqref{existence_uniqueness_cond_eq}, (\ref{nash_implementation_cond_eq}), (\ref{eq19}), and (\ref{IR_cond_eq}) to (\ref{eq40}).
\end{proof}
In (\ref{eq40}), the quadratic term with respect to $s_n$ penalizes deviation from uniform pricing. The second term shows the payment corresponding to the coupling constraint, meaning if agents violate the limitations of resources, they are fined, and if they use less than the maximum amount of the resources, they get paid. The expression $\bar{p}_{-n}^\top(x_n-\frac{c}{N})$  depicts the payment of agent $n$ based on the aggregate proposed price of others. The additional components, which are quadratic terms with respect to $s_{-n}$, are needed to satisfy the LMI conditions. At the equilibrium point, this payment function results in a uniform price which is equal to $\lambda^o/\alpha$, the Lagrange multiplier associated with each constraint in \eqref{manager_problem_eq}, and also yields feasible demands that satisfy the coupling constraints. 
\section{Stochastic Dynamic Stability}
In this section, the dynamic stability (P5) of the induced game is investigated and a decentralized variable sample-size proximal best-response (VS-PBR) \cite{lei2022distributed} algorithm with Krasnoselskij iteration is proposed, through which the agents learn their NE strategy with partial knowledge about their rivals' decisions. 
\par
By eliminating terms in \eqref{eq40} that do not depend on the strategy of user $n\in\mathcal{N}$, we can define \[\hat{t}_n(s_n, \bar{s}_{-n}):=\alpha [s_n^\top Ms_n + (\Delta \bar{s}_{-n}+d)^\top s_n]\] where $\bar{s}_{-n}=\sum_{m\in \mathcal{N}\backslash n}{s_m}$, $d=\textbf{col}(0, \frac{c}{N-1})$, and 
\[
            M=-\begin{pmatrix}
                0 & 0\\
                \frac{1}{N-1}I_{K} & \frac{1}{2}I_{K}
            \end{pmatrix},  \Delta = \frac{-1}{N-1}\begin{pmatrix}
                0 & I_{K}\\
                \frac{-N}{(N-1)}I_{K} & I_{K}
            \end{pmatrix}.
\]

Problem \eqref{utility_maximization_eq}, is equivalent to the following problem:
\begin{equation}\label{u_hat_def_eq}
    \max_{s_n\in\mathcal{S}_n}{[\widehat{U}_n(s_n, \bar{s}_{-n})}:=V_n(x_n)-\hat{t}_n(s_n, \bar{s}_{-n})].
\end{equation}
The proximal best-response for \eqref{u_hat_def_eq} at iteration $i$ for each agent $n\in\mathcal{N}$ is defined as
\[
T_n(s^i) := \arg\min_{z_n\in\mathcal{S}_n}{\left[-\widehat{U}_n(z_n, \bar{s}_{-n}^i)+\frac{\mu}{2}\|z_n-s_n^i\|^2\right]},
\]
and the proximal best-response mapping of the whole game can be denoted as $T(s^i)=\textbf{col}(T_1(s^i), \ldots, T_N(s^i))$.
\begin{assumption}\label{ass2}
	For each $n\in\mathcal{N}$: \begin{enumerate}[(i)]
	    \item The gradient of the valuation function $\nabla_{x_n}V_n(x_n)$ is $L_{V_n}$-Lipschitz continuous in $\mathcal{X}_n$, i.e., \[\|\nabla_{x_n}V_n(x_n)-\nabla_{x_n}V_n(x_n')\|\leq L_{V_n}\|x_n-x_n'\| \forall x_n\in\mathcal{X}_n.\]
        \item For all $\xi_n\in\mathbb{R}^{d_n}$, $\psi_n(x_n;\xi_n)$ is differentiable in $x_n\in\mathcal{X}_n$ and there exists $\kappa^n>0$ such that \[\mathbb{E}[\|\nabla_{x_n}V_n(x_n)-\nabla_{x_n}\psi_n(x_n;\xi_n)\|^2]\leq \kappa^{n^2}.\]
        
	\end{enumerate}  
\end{assumption}
\begin{assumption}\label{ass3}
	The agents choose their pricing strategy from a compact set $\{p_n\in\mathbb{R}^K_+:p_n^k\leq P_{\text{max}}^k, k\in\mathcal{K}\}$.
\end{assumption}
\begin{remark}
	Based on Assumptions \ref{valuation_function_assumption} and \ref{ass2}, along with the KKT conditions of \eqref{manager_problem_eq}, ${p_n^*}^{(k)}$ is bounded. Consequently, the network manager can set a sufficiently large $P_{\text{max}}^k$ to ensure that the NE is not restricted. 
\end{remark}
\begin{assumption}\label{ass5}
    The proximal best-response mapping $T(s)$ is non-expansive on the compact set $s\in\mathcal{S}$, i.e., \[\|T(s)-T(s')\|\leq\|s-s'\|, \, \forall s,s'\in\mathcal{S}.\]
\end{assumption}
\begin{remark}
   The non-expansiveness in Assumption \ref{ass5} can be shown for quadratic valuation functions \cite{grammatico2015decentralized}. For the special case of $V_n(x_n)=\frac{\alpha}{2}\|x_n\|^2+b^\top x_n$, using the Gershgorin Circle theorem \cite[Theorem 1.1]{varga2011gervsgorin}, $T(\cdot)$ is non-expansive if $\mu\geq \frac{N}{N-1}\alpha$.
\end{remark}
Supposing Assumptions \ref{ass3} and \ref{ass5}, the Krasnoselskij iteration given by 
\begin{equation}\label{krasnoselskij_def}
    F(s^i) := (1-\tau)s^i +\tau T(s^i),\quad \tau\in (0,1),
\end{equation} 
converges to a fixed point of $T(\cdot)$ \cite[Theorem 3.2]{berinde2007iterative}.
\begin{algorithm}[!t]
	\caption{Decentralized VS-PBR with Krasnoselskij Step.}
	\begin{algorithmic}
		\State \textbf{Initialization: } $i\leftarrow0$, $\tau\in(0,1)$, $\mu>0$, and $s_n(0)\in \mathbb{R}_{\geq 0}^{2K}$ for $n \in  {\mathcal{N}}$
		%%%%%%%%%%%%%%%%%%%%%%%%%%%%%%%%%%
		\State \textbf{Iteration} $i$
		\vspace{0.05cm}
		\State\hspace*{0.02\linewidth}\vline
		\vspace{0.05cm} 
		\begin{minipage}{0.98\linewidth}
			%%%%%%%%%%%%%%%%%%%%%%%%%%%%%%%%%%
                \State   \textbf{Manager}: 
			\State\hspace*{0.02\linewidth} \vline
			\begin{minipage}{0.9\linewidth}
				\State\hspace*{0.02\linewidth}$
                \bar{s}^i=\sum_{n\in\mathcal{N}}{s^i}$
			\end{minipage}
			\State   \textbf{Agent}: $n \in  \mathcal{N}$
			\State\hspace*{0.02\linewidth} \vline
			\begin{minipage}{0.9\linewidth}
				\State\hspace*{0.02\linewidth}$\widehat{T}_n(s^i)=\arg\min\limits_{s_n\in\mathcal{S}_n}{}\big[-\frac{1}{Q_i}\sum_{q=1}^{Q_i}{\psi_n(x_n; \xi_{n,i}^q)}$
                \State\hspace*{0.16\linewidth}$+\hat{t}_n(s_n, \bar{s}^i - s_n^i)+\frac{\mu}{2}\|s_n-s_n^i\|^2\big]$
				\State\hspace*{0.01\linewidth} $s_n^{i+1}=(1-\tau)s_n^i+\tau \widehat{T}_n(s^i)$
			\end{minipage}
                %%%%%%%%%%
			\vspace{0.1cm}
			%%%%%%%%%%%%%%%%%%%
			\State ${i}\leftarrow {i}+1$
		\end{minipage}
	\end{algorithmic}
	\label{algorithm}
\end{algorithm}
\par
Now, we analyze the convergence properties of the sample-average PBR problem under the Krasnoselskij iteration in Algorithm \ref{algorithm}. It is assumed that at iteration $i$, we have $Q_i$ i.i.d. realizations $\{\xi_{n,i}^q\}_{q=1}^{Q_i}$ of the random vector $\xi_n$. To demonstrate the convergence of the algorithm, it suffices to show that $\lim\limits_{i\rightarrow\infty}{\|s^i-T(s^i)\|}=0$. First, we determine the bound on the inexactness of the approximated solution at each iteration of algorithm \ref{algorithm}, denoted by $\varepsilon_n^{i+1}:=E[\|s_n^{i+1}-F_n(s^i)\|^2]$.
\begin{lemma}\label{lemma2}
	Suppose Assumption \ref{ass2} holds. Define $L_{\widehat{U}_n}:=\sqrt{L_{V_n}^2+\|A_{nn}^n\|^2+2\|B_{nn}^n\|^2}$, and \[C_b^n:=\frac{\mu}{\sqrt{\mu^2+L_{\widehat{U}_n}^2}\left(\sqrt{\mu^2+L_{\widehat{U}_n}^2}-L_{\widehat{U}_n}\right)}.\] Then, for all $n\in\mathcal{N}$, it follows that $\varepsilon_n^{i+1}\leq \frac{\tau^2\kappa^{n^2}C_b^{n^2}}{Q_i}$.
\end{lemma}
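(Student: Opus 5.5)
The plan is to reduce the inexactness to a perturbation bound on the solution of a strongly convex proximal problem. From Algorithm \ref{algorithm} and \eqref{krasnoselskij_def},
\[
s_n^{i+1}-F_n(s^i)=\tau\big(\widehat{T}_n(s^i)-T_n(s^i)\big),
\]
so $\varepsilon_n^{i+1}=\tau^2\,\mathbb{E}[\|\widehat{T}_n(s^i)-T_n(s^i)\|^2]$. It therefore suffices to show
\[
\mathbb{E}\big[\|\widehat{T}_n(s^i)-T_n(s^i)\|^2\,\big|\,s^i\big]\le \frac{C_b^{n^2}\kappa^{n^2}}{Q_i},
\]
and then take total expectation.

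To prove this conditional bound, we write both proximal best responses as solutions of the same problem with different gradients. Let
\[
G_n(z_n)=-\nabla_{s_n}\widehat{U}_n(z_n,\bar{s}_{-n}^i),
\]
and let $\widehat{G}_n$ be its sample-average counterpart, in which $\nabla V_n$ is replaced by $\frac{1}{Q_i}\sum_q\nabla\psi_n(\cdot;\xi_{n,i}^q)$. The two gradients differ only in the $x_n$ block:
\[
e_n:=\widehat{G}_n-G_n=-\textbf{col}\Big(\tfrac{1}{Q_i}\sum_q\nabla\psi_n(x_n;\xi^q)-\nabla V_n(x_n),\,0\Big).
\]
The first-order optimality conditions are the variational inequalities
\[
0\in G_n(z)+\mu(z-s_n^i)+\mathcal{N}_{\mathcal{S}_n}(z),\qquad 0\in \widehat{G}_n(\hat z)+\mu(\hat z-s_n^i)+\mathcal{N}_{\mathcal{S}_n}(\hat z).
\]
Equivalently, $z=\Pi_{\mathcal{S}_n}\big(z-\gamma(G_n(z)+\mu(z-s_n^i))\big)$ for any $\gamma>0$, and similarly for $\hat z$.

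A resolvent/projection argument then compares $z$ and $\hat z$. By nonexpansiveness of $\Pi_{\mathcal{S}_n}$, the difference $\hat z-z$ is bounded by the norm of the map $w\mapsto (1-\gamma\mu)w-\gamma(G_n(w)-G_n(z))$ applied to $\hat z-z$, plus $\gamma\|e_n(\hat z)\|$. The Lipschitz constant of $G_n$ in $s_n$ is at most $L_{\widehat U_n}$: the Hessian block consists of $\nabla^2V_n$ together with the $A_{nn}^n$ and $B_{nn}^n$ blocks appearing twice, which gives $\sqrt{L_{V_n}^2+\|A_{nn}^n\|^2+2\|B_{nn}^n\|^2}$ through a Frobenius-type block bound. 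This yields
\[
\|\hat z-z\|\le \big(|1-\gamma\mu|+\gamma L_{\widehat U_n}\big)\|\hat z-z\|+\gamma\|e_n\|.
\]
A cleaner route avoids relying on monotonicity of $G_n$, which is not sign-definite because of the bilinear $B$ terms. Instead we use $\|(1-\gamma\mu)I-\gamma H\|\le\sqrt{(1-\gamma\mu)^2+\gamma^2L^2}$, which holds by treating the $\mu$ part and the Lipschitz part orthogonally in the worst case. Optimizing over $\gamma$ gives $\gamma=\mu/(\mu^2+L^2)$, with contraction factor $L/\sqrt{\mu^2+L^2}$. Hence
\[
\|\hat z-z\|\le \frac{\gamma}{1-L/\sqrt{\mu^2+L^2}}\|e_n\|=\frac{\mu}{\sqrt{\mu^2+L^2}\big(\sqrt{\mu^2+L^2}-L\big)}\|e_n\|=C_b^n\|e_n\|,
\]
which matches the constant in the statement exactly. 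This confirms that this $\gamma$-optimized contraction argument is the intended one.

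Finally, conditional on $s^i$, the samples $\xi_{n,i}^q$ are i.i.d. and independent of the past. The error $e_n$ is evaluated at $\hat z$, which itself depends on the samples. To avoid the resulting dependence issue, the projection identity should be arranged so that the error is evaluated at the sample-independent point $z=T_n(s^i)$; this can be done by expanding around $z$ rather than $\hat z$. Then $e_n(z)$ is an average of $Q_i$ i.i.d. zero-mean terms, and Assumption \ref{ass2}(ii) gives $\mathbb{E}\|e_n(z)\|^2\le\kappa^{n^2}/Q_i$. Squaring the bound and multiplying by $\tau^2$ yields the claim.

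The main obstacle is this step. The fixed-point inequality must be set up with the noise at the deterministic point, and the contraction factor $L/\sqrt{\mu^2+L^2}$ must be justified without monotonicity of the price–quantity coupling. If the orthogonality bound fails in general, one would instead need to verify it directly from the block structure of $-\nabla^2_{s_n}\widehat U_n$, which is symmetric positive semidefinite by P1(i).
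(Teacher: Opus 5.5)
Your overall route is the paper's. The paper computes the modulus $L_{\widehat U_n}$ (your block-norm argument for it is fine), writes $s_n^{i+1}-F_n(s^i)=\tau(\widehat T_n(s^i)-T_n(s^i))$, and delegates the rest to Lemma 8 of Lei--Shanbhag. Your step-size-optimized projection contraction, with $\gamma=\mu/(\mu^2+L^2)$, $q=L/\sqrt{\mu^2+L^2}$ and $C_b^n=\gamma/(1-q)$, correctly reproduces the constant.

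However, the justification you give for the contraction is wrong. The bound $\|(1-\gamma\mu)I-\gamma H\|\le\sqrt{(1-\gamma\mu)^2+\gamma^2L^2}$ does not hold for arbitrary $\|H\|\le L$. Take $H=-LI$: the left side is $1-\gamma\mu+\gamma L$, which at your $\gamma$ equals $(L^2+\mu L)/(\mu^2+L^2)$ and is below $1$ only if $L<\mu$. The bound needs $\langle w,Hw\rangle\ge 0$, i.e., monotonicity, and monotonicity is available rather than obstructed. The Jacobian of $G_n$ in $s_n$ is the symmetric matrix $-J_{nn}$, and $J_{nn}\preceq\Psi_{nn}\preceq 0$ by P1(i), so the bilinear $B_{nn}^n$ terms do not break it. With monotonicity, adding the two variational inequalities even gives $\|\hat z-z\|\le\|e_n\|/\mu\le C_b^n\|e_n\|$ directly.

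The genuine gap is your last step, and it is not a rearrangement. Write $\widehat G_n(\hat z)-G_n(z)=[\widehat G_n(\hat z)-\widehat G_n(z)]+e_n(z)$. To put the noise at the deterministic point $z=T_n(s^i)$, the monotonicity/Lipschitz step must now be applied to the sampled map $\widehat G_n$. That requires $s_n\mapsto -\frac{1}{Q_i}\sum_q\psi_n(x_n;\xi_{n,i}^q)+\hat t_n(s_n,\bar s_{-n}^i)$ to be convex with $L_{\widehat U_n}$-Lipschitz gradient. Assumption 2 gives curvature and smoothness only for the mean $V_n$, as Remark 1 stresses, plus only a pointwise variance bound on $\nabla\psi_n$.

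Concavity of the samples would not even suffice. The payment alone is not convex in $s_n$: its $x_nx_n$ Hessian block is zero while $B_{nn}^n\neq 0$. So you need the $\alpha$-strong concavity used in P1(i) to hold for the sampled objective.

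Under the stated hypotheses you can control one piece but not both:
\begin{itemize}
\item Expanding at $\hat z$ with $G_n$ handles the deterministic difference. But then $\mathbb{E}\|e_n(\hat z)\|^2\le(\kappa^n)^2/Q_i$ does not follow, because $\hat z$ depends on the same samples. The sampled subproblem need not even have a unique minimizer.
\item Expanding at $z$ handles the noise, since $z$ is $\sigma(s^i)$-measurable and the samples are independent of $s^i$. But it needs monotonicity of $\widehat G_n$, which you do not have.
\end{itemize}

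To close the proof, add a sample-wise hypothesis, e.g.\ $\psi_n(\cdot;\xi)$ is $\alpha$-strongly concave with $L_{V_n}$-Lipschitz gradient for a.e.\ $\xi$. Also add unbiasedness, $\mathbb{E}[\nabla\psi_n(x_n;\xi_n)]=\nabla V_n(x_n)$: your ``zero-mean'' claim uses it, and Assumption 2(ii) does not state it. With both, $\mathbb{E}[\|e_n(z)\|^2\mid s^i]\le(\kappa^n)^2/Q_i$ and the bound follows after multiplying by $\tau^2$. The paper's one-line proof leaves exactly this hypothesis implicit in its citation of Lei--Shanbhag.
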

\begin{proof}
	From Assumption \ref{ass2} we know that in the compact set $\mathcal{X}_n$, $\nabla_{x_n}V_n(x_n)$ is $L_{V_n}$-Lipschitz continuous. Moreover, the map $\nabla_{s_n} \hat{t}_n$ is Lipschitz continuous in the compact set $\mathcal{S}_n$ with the constant $L_{\hat{t}_n}:=\sqrt{\|A_{nn}^n\|^2+2\|B_{nn}^n\|^2}$. Therefore, using the definition of $\widehat{U}_n$ in \eqref{u_hat_def_eq}, we have
    \[\|\nabla_{s_n}\widehat{U}_n(s_n, \bar{s}_{-n}^i)-\nabla_{s_n}\widehat{U}_n(s_n', \bar{s}_{-n}^i)\|\leq L_{\widehat{U}_n}\|s_n-s_n'\|.\]
    From $s_n^{i+1}-F_n(s^i)=\tau(\widehat{T}_n(s^i)-T_n(s^i))$ and following the proof of \cite[Lemma 8]{lei2022distributed}, we have $\varepsilon_n^{i+1}\leq \frac{\tau^2\kappa^{n^2}C_b^{n^2}}{Q_i}$.
\end{proof}
\begin{theorem}\label{theorem4}
	Suppose Assumptions \ref{ass3} and \ref{ass5}, and Lemma \ref{lemma2} hold. Define $C_b:=\max_{n}{[\kappa^{n^2}C_b^{n^2}]}$ and $Q_i:=\left\lceil\frac{C_b}{\eta^{2(i+1)}}\right\rceil$ for some $\eta\in(0,1)$. Then, Algorithm \ref{algorithm} converges to the NE of game $\mathcal{G}$, in the mean-square sense. 
\end{theorem}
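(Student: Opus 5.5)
The plan is to treat Algorithm \ref{algorithm} as an inexact Krasnoselskij iteration $s^{i+1}=F(s^i)+e^i$, where $e^i:=s^{i+1}-F(s^i)$. Summing Lemma \ref{lemma2} over agents and using the choice of $Q_i$ gives
\[
\mathbb{E}[\|e^i\|^2]=\sum_{n\in\mathcal{N}}\varepsilon_n^{i+1}\le \frac{\tau^2 N C_b}{Q_i}\le N\tau^2\eta^{2(i+1)} .
\]
Hence $\sqrt{\mathbb{E}\|e^i\|^2}\le \sqrt N\tau\eta^{i+1}$ is summable. The fixed points of $T$ are exactly the NE of $\mathcal{G}$: the first-order optimality of the proximal problem at $z_n=s_n$ reduces to that of \eqref{u_hat_def_eq}, which is equivalent to \eqref{utility_maximization_eq}. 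By Proposition \ref{existence_uniqueness_prop} this NE $s^*$ is unique. So it suffices to show $\mathbb{E}[\|s^i-s^*\|^2]\to0$.

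The first step is a quasi-Fejér estimate. Since $T$ is nonexpansive (Assumption \ref{ass5}) and $s^*=T(s^*)$, the averaged map $F$ satisfies the standard identity
\[
\|F(s)-s^*\|^2\le\|s-s^*\|^2-\tau(1-\tau)\|s-T(s)\|^2 .
\]
With $u^i:=\|s^i-s^*\|$, this gives $u^{i+1}\le\|F(s^i)-s^*\|+\|e^i\|\le u^i+\|e^i\|$. Taking $L^2$ norms via Minkowski, $\sqrt{\mathbb{E}(u^{i+1})^2}\le\sqrt{\mathbb{E}(u^i)^2}+\sqrt N\tau\eta^{i+1}$, so $\mathbb{E}(u^i)^2$ is bounded. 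Boundedness is also immediate from the compactness of $\mathcal{S}$ under Assumption \ref{ass3}.

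Next I square the inequality:
\[
(u^{i+1})^2\le (u^i)^2-\tau(1-\tau)\|s^i-T(s^i)\|^2+2u^i\|e^i\|+\|e^i\|^2 .
\]
Taking expectations, with Cauchy–Schwarz on the cross term and the bounds above, gives
\[
\mathbb{E}(u^{i+1})^2\le\mathbb{E}(u^i)^2-\tau(1-\tau)\mathbb{E}\|s^i-T(s^i)\|^2+c\,\eta^{i+1}
\]
for some constant $c$. Telescoping yields $\sum_i\mathbb{E}\|s^i-T(s^i)\|^2<\infty$, hence $\mathbb{E}\|s^i-T(s^i)\|^2\to0$. This is the residual condition the paper identifies as sufficient. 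The same inequality shows that $\mathbb{E}(u^i)^2$ converges to some limit $\ell\ge0$, by the deterministic Robbins–Siegmund lemma with summable perturbations.

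The main obstacle is showing $\ell=0$. Nonexpansiveness alone yields only vanishing residuals, not convergence to a particular point, and mean-square convergence requires more than that. I would first exploit that the game has a strongly monotone pseudogradient, by Proposition \ref{existence_uniqueness_prop}(ii). For the proximal best response this gives a quantitative link
\[
\|s-s^*\|\le C\|s-T(s)\|
\]
for some constant $C$. To prove it, I would write the optimality condition of $T(s)$, namely $0\in-\nabla\widehat U(T(s))+\mu(T(s)-s)+N_{\mathcal S}(T(s))$, with the other agents frozen at $s$, compare it with the NE variational inequality, and use the $\upsilon$-strong monotonicity together with the Lipschitz constants $L_{\widehat U_n}$ from Lemma \ref{lemma2}. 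With this error bound, $\mathbb{E}\|s^i-s^*\|^2\le C^2\mathbb{E}\|s^i-T(s^i)\|^2\to0$, which gives the theorem.

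As a fallback, if the error bound is too delicate, I would argue almost surely instead. Summability of $\mathbb{E}\|e^i\|$ gives $\sum\|e^i\|<\infty$ a.s. Then Opial-type reasoning for inexact Krasnoselskij–Mann iterations (Combettes) gives $s^i\to$ a fixed point a.s., and uniqueness identifies it as $s^*$. Finally, dominated convergence on the compact set $\mathcal{S}$ upgrades this to mean-square convergence.
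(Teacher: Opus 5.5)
Your proof is correct. It follows the paper's skeleton in its first half but not in its second.

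\textbf{First half.} The paper expands $\|s_n^{i+1}-s_n^*\|^2$ agent by agent, adds $a^2\|s_n^i-T_n(s^i)\|^2$ with $a^2\le\tau(1-\tau)$, and telescopes to $\sum_i\mathbb{E}\|s^i-T(s^i)\|^2<\infty$. It then simply declares this to be mean-square convergence to the fixed point, citing uniqueness of the NE on the compact set $\mathcal{S}$. You reach the same summability through the averaged-operator identity for the full vector $s$. That is the form Assumption \ref{ass5} actually licenses. The paper's per-agent step $\|T_n(s^i)-T_n(s^*)\|\le\|s_n^i-s_n^*\|$ does not follow from nonexpansiveness of $T$, because $T_n$ also depends on $\bar s_{-n}$.

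\textbf{Second half.} You make explicit the residual-to-iterate step that the paper only asserts. This is the real added value of your write-up.

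Of your two ways to close that step, keep the fallback.
\begin{itemize}
\item The error bound $\|s-s^*\|\le C\|s-T(s)\|$ does follow from strong monotonicity plus Lipschitz continuity of the pseudogradient. Note that you need Lipschitz dependence on $\bar s_{-n}$ (i.e.\ on $\alpha\Delta$), not only the own-variable constants $L_{\widehat U_n}$.
\item However, strong monotonicity is not among the hypotheses of Theorem \ref{theorem4}. For the payment \eqref{eq40} that Algorithm \ref{algorithm} runs, it does not hold, Proposition \ref{prop2} notwithstanding.
\item To see this, let $G:=-\textbf{col}(\nabla_{s_n}U_n)_{n}$ and take quadratic $V_n$ with Hessian $-\alpha I_K$. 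Shift every $p_n$ by $v$ and every $x_n$ by $tv$ to get $s'$. Then $\langle G(s')-G(s),s'-s\rangle=N\alpha\|v\|^2\big(t^2-\tfrac{t}{N-1}\big)$. This is zero at $t=0$ and negative for $0<t<\tfrac{1}{N-1}$.
\end{itemize}
The fallback needs only Assumption \ref{ass5}, compactness, and uniqueness of the fixed point (which the statement presupposes). These are the same ingredients the paper appeals to.

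An even shorter completion avoids Opial and almost-sure arguments altogether. The map $\phi(s):=\|s-T(s)\|$ is continuous on the compact set $\mathcal{S}$ and vanishes only at $s^*$. Hence for each $\epsilon>0$ there is $\gamma>0$ with $\phi(s)\ge\gamma$ whenever $\|s-s^*\|\ge\epsilon$. Markov's inequality and $\mathbb{E}[\phi(s^i)^2]\to0$ then give convergence in probability, and boundedness of $\mathcal{S}$ upgrades this to mean-square convergence.
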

\begin{proof}
	Let $s^*$ be a fixed point of $T$. From \[s_n^{i+1} - s_n^* = s_n^{i+1} - F_n(s^i) + F_n(s^i) - s_n^*,\] and using the definition of $F_n(s^i)$ in \eqref{krasnoselskij_def}, it can be deduced that
	\[
		s_n^{i+1}-s_n^*=s_n^{i+1}-F_n(s^i)+(1-\tau)(s_n^i-s_n^*)+\tau(T_n(s^i)-s_n^*).
	\]
	Taking the squared norm of this equation, we have
	\begin{equation}\label{eq33}
		\begin{aligned}
		\begin{split}
		\|&s_n^{i+1}-s_n^*\|^2=\|s_n^{i+1}-F_n(s^i)\|^2+(1-\tau)^2\|s_n^i-s_n^*\|^2\\
		&+\tau^2\|T_n(s^i)-s_n^*\|^2+2(1-\tau)\langle s_n^i-s_n^*, s_n^{i+1}-F_n(s^i)\rangle\\
		%&\\
		&+2\tau\langle T_n(s^i)-s_n^*, s_n^{i+1}-F_n(s^i)\rangle\\
		&+2\tau(1-\tau)\langle s_n^i-s_n^*,T_n(s^i)-s_n^*\rangle.
		\end{split}
		\end{aligned}
	\end{equation}
	In addition, for any constant $a$, we have \[a(s_n^i-T_n(s^i))=a(s_n^i-s_n^*)-a(T_n(s^i)-s_n^*),\] and therefore,
	\begin{equation}\label{eq34}
	\begin{aligned}
	\begin{split}
	a^2\|s_n^i-T_n(s^i)\|^2=&a^2\|s_n^i-s_n^*\|^2+a^2\|T_n(s^i)-s_n^*\|^2\\
	& -2a^2\langle s_n^i-s_n^*, T_n(s^i)-s_n^*\rangle.
	\end{split}
	\end{aligned}
	\end{equation}
	Since $T_n(s^*)=s_n^*$, by Assumption \ref{ass5}, we obtain
    \[\|T_n(s^i)-s_n^*\|=\|T_n(s^i)-T_n(s^*)\|\leq\|s_n^i-s_n^*\|.\]
     Adding (\ref{eq33}) and (\ref{eq34}), and using the above inequality, we have
	\begin{equation}\label{eq35}
		\begin{aligned}
		\begin{split}
		\|&s_n^{i+1}-s_n^*\|^2+a^2\|s_n^i-T_n(s^i)\|^2\\
		&\leq (2a^2+\tau^2+(1-\tau)^2)\|s_n^i-s_n^*\|^2+\|s_n^{i+1}-F_n(s^i)\|^2\\
		&\quad+2(\tau(1-\tau)-a^2)\langle s_n^i-s_n^*,T_n(s^i)-s_n^*\rangle\\
		&\quad+2(1-\tau)\langle s_n^i-s_n^*, s_n^{i+1}-F_n(s^i)\rangle\\
		&\quad+ 2\tau\langle T_n(s^i)-s_n^*, s_n^{i+1}-F_n(s^i)\rangle.
		\end{split}
		\end{aligned}
	\end{equation}
	Using the Cauchy-Schwarz inequality, and choosing $a$ such that $a^2\leq \tau(1-\tau)$, we have 
    \begin{equation*}
    \begin{aligned}
         &(\tau(1-\tau)-a^2)\langle s_n^i-s_n^*, T_n(s^i)-s_n^*\rangle\leq\\& (\tau(1-\tau)-a^2)\|s_n^i-s_n^*\|\|T_n(s^i)-s_n^*\|\leq \\&(\tau(1-\tau)-a^2)\|s_n^i-s_n^*\|^2.
    \end{aligned}
    \end{equation*}
    Similarly, we can write 
    \begin{equation*}\langle s_n^i-s_n^*, s_n^{i+1}-F_n(s^i)\rangle\leq \|s_n^i-s_n^*\|\|s_n^{i+1}-F_n(s^i)\|,\end{equation*} 
    and 
    \begin{equation*}
    \begin{aligned}
         \langle T_n(s^i)-s_n^*, s_n^{i+1}&-F_n(s^i)\rangle \leq \\&\|T_n(s^i)-s_n^*\|\|s_n^{i+1}-F_n(s^i)\|\\&=\|T_n(s^i)-T_n(s^*)\|\|s_n^{i+1}-F_n(s^i)\|\\&\leq \|s_n^i-s_n^*\|\|s_n^{i+1}-F_n(s^i)\|.
    \end{aligned}  
	\end{equation*} 
    Substituting these bounds into the right-hand side of inequality \eqref{eq35}, we obtain
	\begin{equation}\label{eq36}
	\begin{aligned}
		\begin{split}
		\|s_n^{i+1}-s_n^*\|^2&+a^2\|s_n^i-T_n(s^i)\|^2\\
		&\leq\|s_n^i-s_n^*\|^2+ \|s_n^{i+1}-F_n(s^i)\|^2\\
		& +\|s_n^i-s_n^*\|\|s_n^{i+1}-F_n(s^i)\|.
		\end{split}
	\end{aligned}
	\end{equation}
	Moving $\|s_n^{i+1}-s_n^*\|^2$ to the right-hand side of (\ref{eq36}) and summing over iterations $i=\{0,\ldots, I\}$, we have
	\begin{equation}\label{eq37}
		\begin{aligned}
			\begin{split}
			&a^2\sum_{i=0}^{I}{\|s_n^i-T_n(s^i)\|^2}\leq \sum_{i=0}^{I}{\left[\|s_n^i-s_n^*\|^2-\|s_n^{i+1}-s_n^*\|^2\right]}\\
			&+\sum_{i=0}^{I}{\|s_n^{i+1}-F_n(s^i)\|^2}+\sum_{i=0}^{I}{\|s_n^i-s_n^*\|\|s_n^{i+1}-F_n(s^i)\|}\\
            &=\|s_n^0-s_n^*\|^2-\|s_n^{I+1}-s_n^*\|^2
			+\sum_{i=0}^{I}{\|s_n^{i+1}-F_n(s^i)\|^2}\\&+\sum_{i=0}^{I}{\|s_n^i-s_n^*\|\|s_n^{i+1}-F_n(s^i)\|}.
			\end{split}
		\end{aligned}
	\end{equation}
	Since $\mathcal{S}_n$ is a compact set, we can define $C^{n,i}$ such that $\mathbb{E}[\|s_n^i-s_n^*\|^2]\leq C^{n,i}$ and denote $C^{n,\text{max}}:=\max_i{C^{n,i}}$. Using Lemma~\ref{lemma2} and  $Q_i=\left\lceil\frac{\max_{n}{[\kappa^{n^2}C_b^{n^2}]}}{\eta^{2(i+1)}}\right\rceil\geq \frac{\kappa^{n^2}C_b^{n^2}}{\eta^{2(i+1)}}$, we can show that \[\mathbb{E}[\|s_n^{i+1}-F_n(s^i)\|^2]\leq \tau^2\eta^{2(i+1)}.\] By taking expectations on \eqref{eq37} and using H{\"o}lder's inequality $\mathbb{E}[\|XY\|]\leq\sqrt{\mathbb{E}[\|X\|^2]\mathbb{E}[\|Y\|^2]}$, we obtain that 
 \begin{equation*}
 \begin{aligned}
     \begin{split}
         a^2\sum_{i=0}^{I}{\mathbb{E}[\|s_n^i-T_n(s^i)\|^2]}&\leq C^{n,0}+\tau^2\sum_{i=0}^{I}{\eta^{2(i+1)}}\\&+\tau\sqrt{C^{n,\text{max}}}\sum_{i=0}^{I}{\eta^{(i+1)}}.
     \end{split}
 \end{aligned}
 \end{equation*}
 Denote $C^0=\max_n{[C^{n,0}]}$ and $C^{\text{max}}=\max_n{[C^{n,\text{max}}]}$, and choose $a^2=\tau(1-\tau)$. Since $\eta\in (0,1)$, as $I\to\infty$, we have
    \begin{equation*}
        \begin{aligned}
           \sum_{i=0}^{\infty}&{\mathbb{E}[\|s^i-T(s^i)\|^2]}\leq\\ &\frac{N}{\tau(1-\tau)}\left(C^0+\frac{\tau\eta[\tau\eta+\sqrt{\frac{C^{\text{max}}}{N}}(1+\eta)]}{1-\eta^2}\right)<\infty.
        \end{aligned}
    \end{equation*}
    Hence, the sequence $\{\mathbb{E}[\|s^i-T(s^i)\|^2]\}_{i\ge0}$ is summable. Since all terms are nonnegative, it follows from standard results on nonnegative summable sequences \cite[Theorem 3.24]{rudin2021principles} that
$
\mathbb{E}[\|s^i-T(s^i)\|^2] \to 0
$ as $i\to\infty$,
establishing mean-square convergence to a fixed point $s^*=T(s^*)$.  Since the game $\mathcal{G}$ admits a unique Nash equilibrium on the compact set $\mathcal{S}$, and fixed points of $T$ correspond to Nash equilibria, it follows that Algorithm \ref{algorithm} converges to $s^*$ from any initial condition $s^0 \in \mathcal{S}$.
\end{proof}
Based on Theorem \ref{theorem4}, we observe that the algorithm provided by the manager converges to the NE of game $\mathcal{G}$, in the presence of random disturbances. Therefore, the mechanism with the payment function \eqref{eq40} satisfies the stochastic dynamic stability property (P5), together with properties P1-P4.
\section{Numerical Simulations}
In this section, we evaluate the effectiveness of the proposed mechanism through simulations on a transportation network inspired by \cite{bakhshayesh2021decentralized}, with $N$ strategic electric vehicle (EV) users. The network is modeled as a directed graph $G=(\mathcal{V},\mathcal{E})$, where $\mathcal{V}={1,\ldots,V}$ is the set of nodes, including road intersections, charging stations $(\mathcal{H}\subseteq\mathcal{V})$, and user origins ($o_n\in\mathcal{V}$). The edge set $\mathcal{E}\subseteq\mathcal{V}\times\mathcal{V}$ represents the road segments.

Each EV user has a charging demand $q_n$ and satisfies it by simultaneously selecting a charging station and a route to it, following the strategy $x_n:=\textbf{col}(r_n,d_n)$. Here, $r_n:=\textbf{col}(r_n^1,\ldots,r_n^E)$ denotes the probability distribution over routes, while $d_n:=\textbf{col}(d_n^1,\ldots,d_n^H)$ specifies the allocation of charging demand across stations. The valuation function of user $n$ is defined as
\begin{equation*}
\begin{aligned}
\begin{split}
    V_n(x_n) &:= -\mathbb{E}\bigg[\frac{\alpha_n}{2}(\|x_n-\Tilde{x}_n\|^2-\|\tilde{x}_n\|^2)\\&+w_n\sum_{e\in\mathcal{E}}{(a_e+\xi_{a,n}^e)r_n^e}+\sum_{h\in\mathcal{H}}{(\rho_h+\xi_{\rho,n}^h)d_n^h}\bigg],
\end{split}
\end{aligned}
\end{equation*}
where $\Tilde{x}_n$ is the user’s preferred strategy, $w_n$ is the weight associated with travel time, $a_e$ is the average travel time on edge $e$, and $\rho_h$ is the electricity price at station $h$. The zero-mean disturbances $\xi_{a,n}^e$ and $\xi_{\rho,n}^h$ capture the randomness due to weather and grid conditions. The first term reflects satisfaction from choosing preferred routes and stations, the second term represents perceived travel costs, and the last term accounts for charging expenses.
The local constraints for each driver $n\in\mathcal{N}$ are defined as
\begin{equation*}
    \mathcal{X}_n:=\{x_n|r_n^e\in [0,1], \ d_n^h\in \mathbb{R}_{\geq 0}^H, \sum_{h\in\mathcal{H}}{d_n^h}\leq q_n, \ \mathcal{A}r_n=b_n\},
\end{equation*} 
where $\mathcal{A}\in \{0,1,-1\}^{V\times E}$ is the node-edge incidence matrix with entries $\mathcal{A}_{ve}=1$ if edge $e$ enters node $v$, $\mathcal{A}_{ve}=-1$ if edge $e$ leaves node $v$, and $\mathcal{A}_{ve}=0$ otherwise.
Here, $b_n:=\textbf{col}(b_n^1, \ldots, b_n^V)$ where for each $v\in\mathcal{V}$,
\begin{equation*}
    b_n^v=\begin{cases}
    -\frac{1}{q_n}\sum_{h\in\mathcal{H}}{d_n^h},&\text{if}\ v=o_n,\\
    d_n^v/q_n,&\text{if}\ v\in \mathcal{H},\\
    0,& \text{otherwise}.
\end{cases} 
\end{equation*} 
The constraint $\mathcal{A}r_n=b_n$ captures the coupling between route selection and charging decisions. In particular, it enforces flow conservation at the origin and charging stations, implying that each user departs from its origin with total probability $\frac{1}{q_n}\sum_{h\in\mathcal{H}}{d_n^h}$, and distributes this probability across charging stations proportionally to $d_n^h/q_n$. There are also network coupling constraints, 
$       \sum_{n\in\mathcal{N}}{r_n^e} \leq c_e^r, \
        \sum_{n\in\mathcal{N}}{d_n^h} \leq c_h^d,
$
representing the limited capacity on roads and energy capacity at stations.

\begin{table}[t!]
	\caption{Charging Stations' Energy Capacity and Electricity Price.}
	\centering
	\setlength{\tabcolsep}{4pt}
	\begin{tabular}{|l|c|c|c|c|c|c|}
		\hline
		\textbf{Station [k]} & \textbf{1} & \textbf{2} & \textbf{3} & \textbf{4} & \textbf{5} & \textbf{6}\\
		\hline
		\textbf{$\pmb{c_k^d}$ (kWh)} & 481.5 & 259.2 & 518.5 & 592.6 & 444.4 & 333.3\\
		\hline
		\textbf{$\pmb{\rho_k}$ (\textcent/kWh)} & 11.3 & 8.4 & 9.2 & 12.4 & 10.7 & 8.6\\
		\hline
	\end{tabular}\label{table}	
\end{table}

\begin{figure}[t!]
	\centering
	\includegraphics[width=8cm,height=10cm]{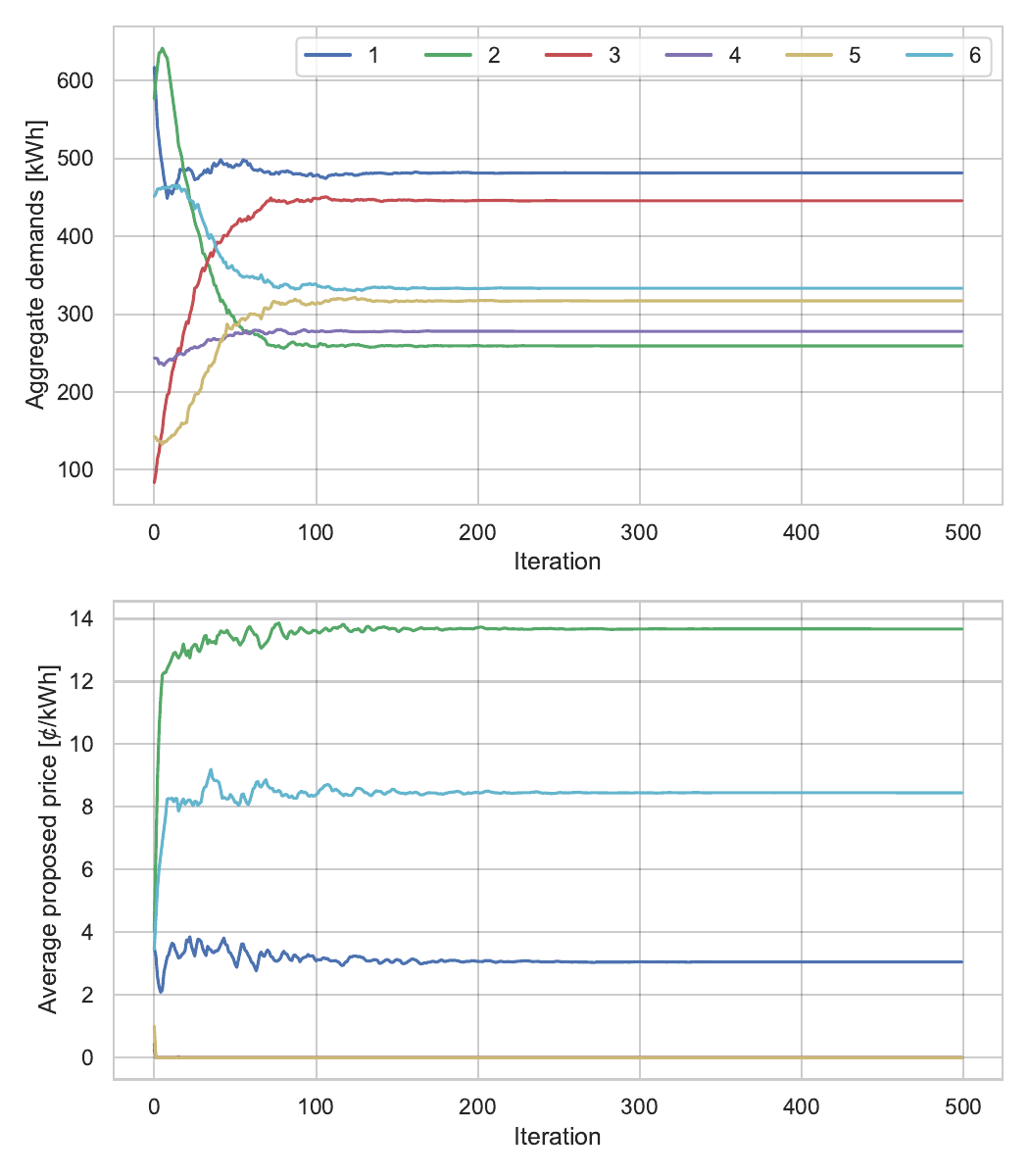}
	\caption{Dynamic stability (P5) of Algorithm \ref{algorithm}. Trajectories of the aggregate charging demand, $\sum_{n\in\mathcal{N}}{d_n^h}$ (top), and average proposed unit price, $\frac{1}{N}\sum_{n\in\mathcal{N}}{p_n^h}$ (bottom), for each charging station $h\in\mathcal{H}$.}
	\label{dynamic_stability_fig}
\end{figure}

Our case study is based on the Sioux Falls City transportation system \cite{chakirov2014enriched}, which consists of $24$ nodes and $76$ edges, of which $6$ nodes are designated as charging stations. The free-flow travel time on each edge is computed as $a_e=\frac{\text{length}}{\text{FFS}}$, where the edge length is obtained from Google Maps and the free-flow speed (FFS) is drawn from a uniform distribution $FFS\sim U(40,70)$ km/h. Road capacity is modeled as $c^r_e = 4 \times \text{FFS}$, assuming a traffic density of four vehicles per kilometer. The random perturbation in travel time is given by $\xi_{a,n}^e\sim U(-a_e/5,a_e/5)$. For charging stations, the available energy and electricity prices are listed in Table \ref{table}. The price perturbation at station $h$ is modeled as $\xi_{\rho,n}^h\sim U(-\rho_h/5,\rho_h/5)$. We simulate $50$ EV users with randomly assigned origins, electricity demands $q_n \sim U(20,70)$ kWh, monetary time value taken from a uniform distribution $w_n \sim U(10,40)$ \textcent/h, and $\alpha_n=0.8$.

\par
We run Algorithm \ref{algorithm} with $\tau=0.2$, $\mu=1$, and $Q_i=\lceil \tilde{\eta}^{i+1}\rceil$ where $\tilde{\eta}=0.96$. Figure \ref{dynamic_stability_fig} illustrates the dynamic stability (P5) of Algorithm \ref{algorithm}, as it shows the convergence of aggregate charging demands $\sum_{n\in\mathcal{N}}{d_n^h}$ for $h\in\mathcal{H}$ in the top panel, and the convergence of average unit price $\tfrac{1}{N}\sum_{n\in\mathcal{N}}{p_n^h}$ for $h\in\mathcal{H}$ in the bottom panel.

 Figure \ref{nash_implement_fig} illustrates the Nash implementation property (P2) of the proposed mechanism through the relative error with respect to the optimal solution of the centralized optimization problem \eqref{manager_problem_eq}, given by $\tfrac{\|x^i-x^o\|}{\|x^0-x^o\|}$, over the iterations. The top panel shows the evolution of the relative error for different sampling degrees $\tilde{\eta}\in\{0.96,0.97,0.98\}$. For $\tilde{\eta}=0.96$, the relative error decreases to $10^{-7}$ within 400 iterations. The bottom panel illustrates the evolution of the relative error for different Krasnoselskij coefficients $\tau \in \{0.2,0.5,0.8\}$. It can be seen that the relative error decreases faster for $\tau = 0.2$. This is because, in the Krasnoselskij iteration, smaller values of $\tau$ assign greater weight to the previous iterate, thereby damping stochastic fluctuations and yielding smoother and more stable convergence trajectories.

 \begin{figure}[t!]
	\centering
	\includegraphics[width=8cm,height=10cm]{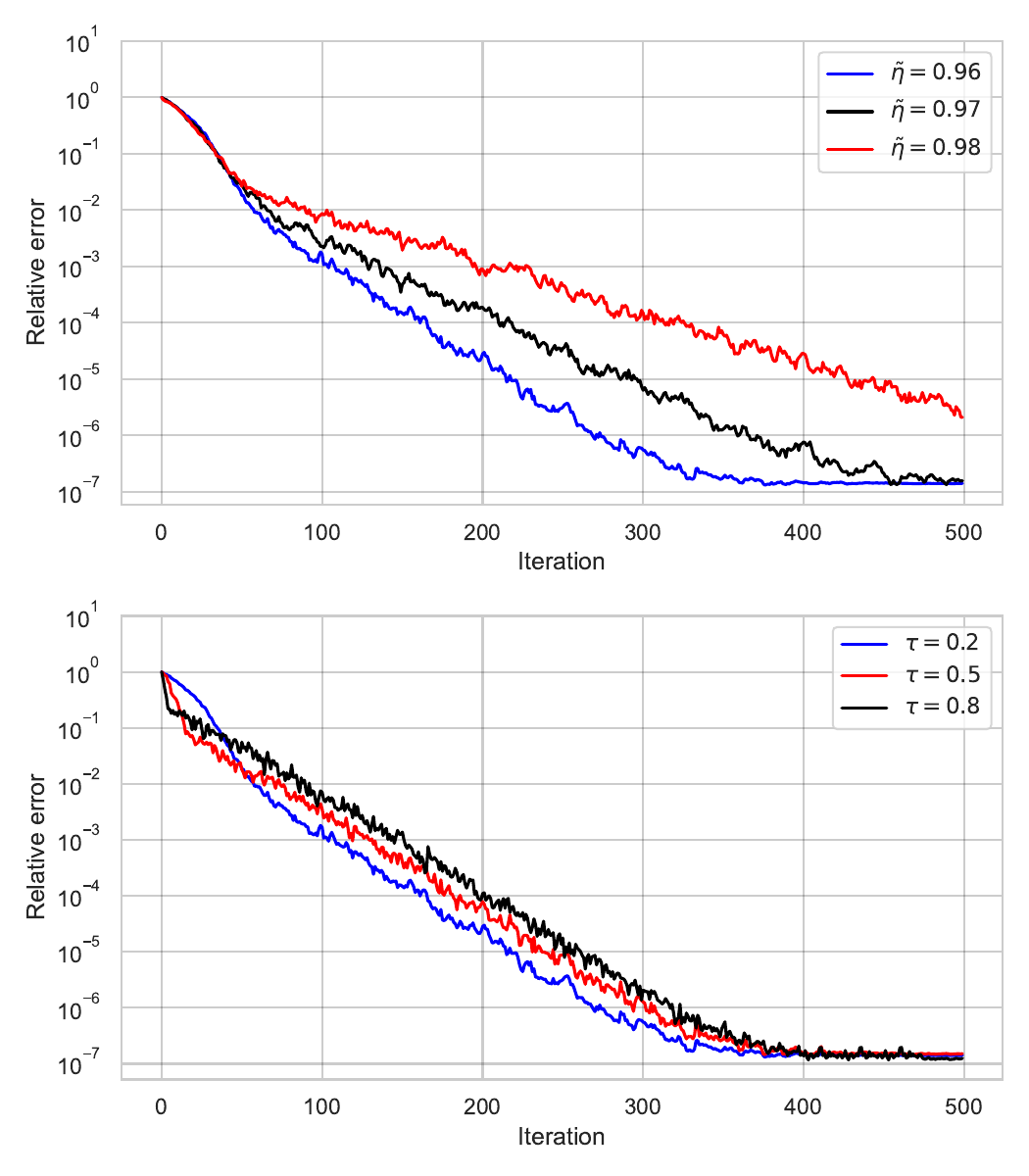}
	\caption{Nash implementation property (P2). Relative error 
$\tfrac{\|x^i-x^o\|}{\|x^0-x^o\|}$ 
with respect to the sampling degree $\tilde{\eta}$ (top) and the Krasnoselskij coefficient $\tau$ (bottom).}
	\label{nash_implement_fig}
\end{figure}

The budget balance property (P3) of the mechanism is illustrated in Figure \ref{budget_balance_fig}. The figure shows that the aggregate payments of all users $\sum_{n\in\mathcal{N}}{t_n(s_n,s_{-n})}$ rapidly converge to zero. This behavior is induced by the quadratic term in the payment functions \eqref{tax_function_eq}, which penalizes deviations from uniform pricing. Figure \ref{individual_rationality_fig} shows the evolution of users' utilities $U_n(s_n,s_{-n})$ through iterations. We can see that at the equilibrium, all utilities are non-negative, implying that the individual rationality property (P4) of the mechanism is satisfied.

\begin{figure}[t!]
	\centering
	\includegraphics[width=8cm,height=5cm]{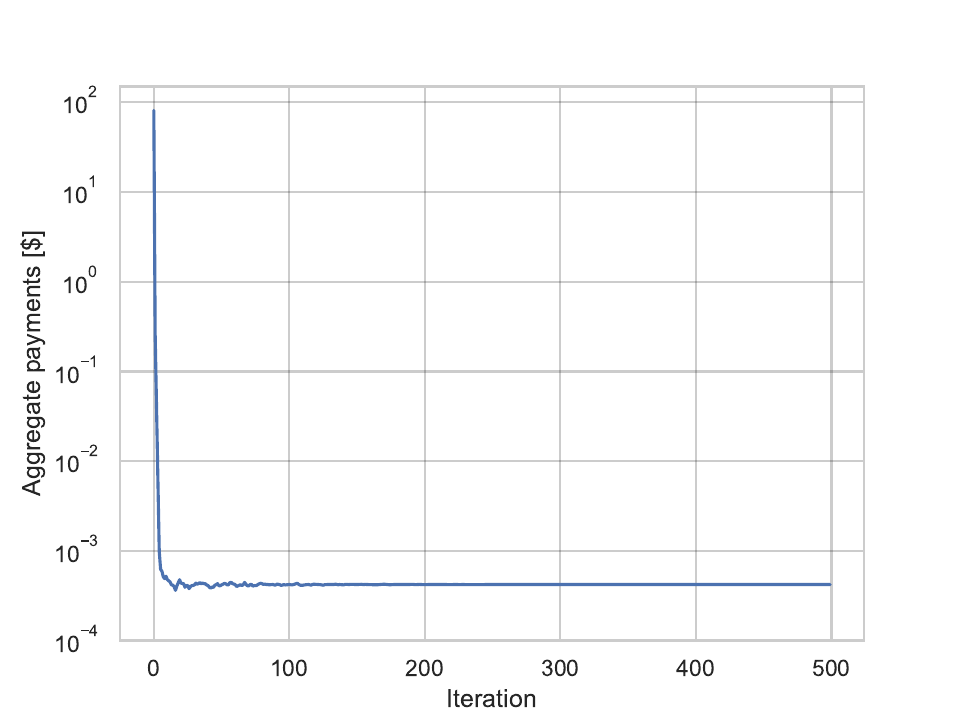}
	\caption{Budget balance property (P3). Evolution of the aggregate payment $\sum_{n\in\mathcal{N}}{t_n(s_n,s_{-n})}$ over the iterations.}
	\label{budget_balance_fig}
\end{figure}

\begin{figure}[t!]
	\centering
	\includegraphics[width=8cm,height=5cm]{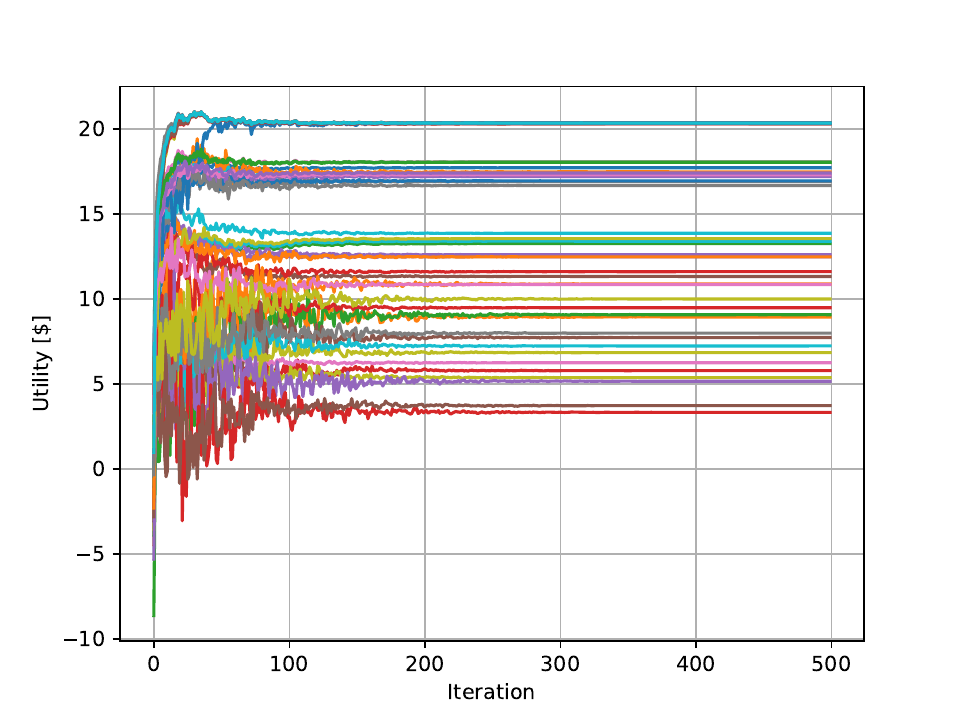}
	\caption{Individual rationality property (P4). Evolution of the users' utilities $U_n(s_n,s_{-n})$ over the iterations.}
	\label{individual_rationality_fig}
\end{figure}

Figure \ref{stations_img} demonstrates EV users’ charging behavior with and without the imposition of payment functions. As can be seen, the payment mechanism effectively enforces the coupling constraints on users’ strategies and also determines the average taxing price for charging at each station. When available energy at a station is scarce (or equivalently, when demand is high), service prices increase, e.g., at stations 2 and 6. Conversely, when resources are abundant (or demand is low), prices decrease (e.g., station 1).
\begin{figure}[t!]
	\centering
	\includegraphics[width=8cm,height=5cm]{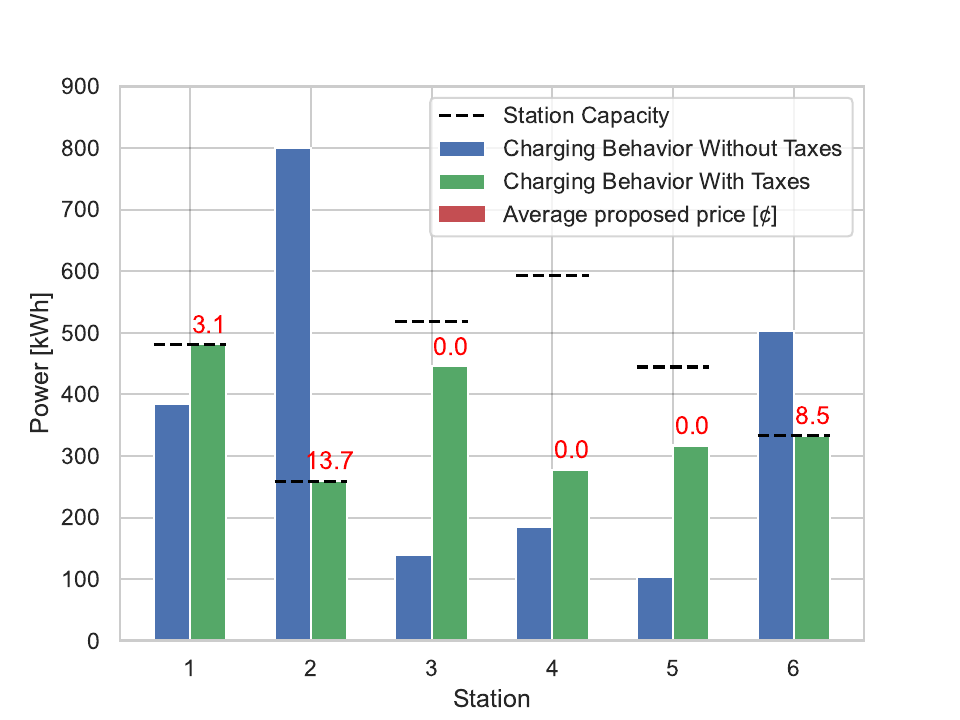}
	\caption{EV users charging behavior and average proposed taxing price.}
	\label{stations_img}
\end{figure}
\section{Conclusion}\label{sec7}
In this paper, an incentive mechanism was proposed to induce a game among agents that implements the solution of a centralized resource allocation problem. First, the payment function of the mechanism was designed systematically by imposing LMI conditions and solving the optimization. Then, a sample-based learning algorithm was provided and its mean-square convergence to the NE was investigated.

As a future work, one can propose an optimization approach which obtains a mechanism among the agents who connect with each other through the graph. Also, dynamic mechanism design can be considered as another future research direction.

\bibliographystyle{plain}        
\bibliography{autosam}

\end{document}